\documentclass[twoside,11pt]{article}

\usepackage{amsgen,amsmath,amstext,amsbsy,amsopn,amsfonts,amssymb, amsthm}
\usepackage{color}
\usepackage{geometry}
\geometry{hmargin=20mm,vmargin=20mm}
\usepackage{enumitem}
\bibliographystyle{plain}

\usepackage{graphicx,subfigure}
\usepackage[table]{xcolor}
\usepackage[ruled,linesnumbered]{algorithm2e}
\usepackage{booktabs}
\usepackage{hyperref}

\newtheorem{thm}{Theorem}
\newtheorem{lem}[thm]{Lemma}

\newtheorem{prp}[thm]{Proposition}
\newtheorem{df}[thm]{Definition}

\newtheorem{rem}[thm]{Remark}

\newcommand{\indic}{\mathrm{1}}
\def\n{\vert\vert}
\def\>{$\Rightarrow$}
\def\<>{$\Leftrightarrow$}
\def\G{\mathcal{G}}

\def\R{\mathbb{R}}
\def\L{\mathcal{L}}

\def\=>{\Rightarrow}
\def\spL{\mathrm{sp}(\L)}

\newcolumntype{L}{>{$}l<{$}}
\newcolumntype{C}{>{$}c<{$}}
\definecolor{lgray}{gray}{0.75}
\def\lg{\cellcolor{lgray}}
\definecolor{ggray}{gray}{0.85}
\def\gg{\cellcolor{ggray}}
\definecolor{agray}{gray}{0.95}
\def\ag{\cellcolor{agray}}

\providecommand{\keywords}[1]{{\textit{Keywords:}} #1}

\RequirePackage{authblk}
\usepackage{fancyhdr}
\setlength{\headheight}{14.5pt}
\pagestyle{fancy}

\lhead[\thepage]{}
\rhead[]{\thepage}
\chead[B. de Loynes, F. Navarro and B. Olivier]{B. de Loynes, F. Navarro and B. Olivier}
\rfoot[]{}
\lfoot[]{}
\cfoot{}

\title{Localized Fourier Analysis for Graph Signal Processing}
\author{
{Basile de Loynes\thanks{Basile de Loynes\\
\hspace*{1.8em}ENSAI, France, 
E-mail: basile.deloynes@ensai.fr
}
, Fabien Navarro\thanks{Fabien Navarro\\
\hspace*{1.8em}SAMM, Paris 1 Panth\'eon-Sorbonne University, France, 
E-mail: fabien.navarro@univ-paris1.fr
}
, Baptiste Olivier\thanks{Baptiste Olivier\\
\hspace*{1.8em}Orange Labs, France. 
E-mail: baptiste.olivier@orange.com}}
}

\begin{document}

\maketitle

\begin{abstract}
We propose a new point of view in the study of Fourier analysis on graphs, taking advantage of localization in the Fourier domain. For a signal $f$ on  vertices of a weighted graph $\G$ with Laplacian matrix $\L$, standard Fourier analysis of $f$ relies on the study of functions $g(\L)f$ for some filters $g$ on $I_\L$, the smallest interval containing the Laplacian spectrum ${\rm sp}(\L) \subset I_\L$. We show that for carefully chosen partitions $I_\L = \sqcup_{1\leq k\leq K} I_k$ ($I_k \subset I_\L$), there are many advantages in understanding the collection $(g(\L_{I_k})f)_{1\leq k\leq K}$ instead of $g(\L)f$ directly, where $\L_I$ is the projected matrix $P_I(\L)\L$. First, the partition provides a convenient modelling for the study of theoretical properties of Fourier analysis and allows for new results in graph signal analysis (\emph{e.g.} noise level estimation, Fourier support approximation). We extend the study of spectral graph wavelets to wavelets localized in the Fourier domain, called LocLets, and we show that well-known frames can be written in terms of LocLets. From a practical perspective, we highlight the interest of the proposed localized Fourier analysis through many experiments that show significant improvements in two different tasks on large graphs, noise level estimation and signal denoising. Moreover, efficient strategies permit to compute sequence $(g(\L_{I_k})f)_{1\leq k\leq K}$ with the same time complexity as for the computation of $g(\L)f$.
\end{abstract}

\keywords{
Nonparametric regression; Multiscale statistics; Variance estimation; Concentration inequalities; Graph signal processing; Spectral graph theory; Graph Laplacian; Harmonic analysis on graphs}

\section{Introduction}

Graphs provide a generic representation for modelling and processing data that reside on complex domains such as transportation or social networks. Numerous works combining both concepts from algebraic and spectral graphs with those from harmonic analysis (see for example \cite{chung1997spectral,coifman2006diffusion,belkin2008towards} and references therein) have allowed to generalize fundamental notions from signal processing to the context of graphs thus giving rise to Graph Signal Processing (GSP). For an introduction to this emerging field and a review of recent developments and results see \cite{shuman2013emerging} and \cite{ortega2018graph}. 
In general, two types of problems can be distinguished according to whether the underlying graph is known or unknown. The first case corresponds to the setup of a sampled signal at certain irregularly spaced points (intersections of a transportation network, nodes in a computer network, \ldots). In the second case, a graph is constructed from the data itself, it is generally interpreted as a noisy realization of one or several distributions supported by a submanifold of the Euclidean space. In this latter context, the theoretical submanifold is somehow approximated using standard methods such as $k$-NN, $\varepsilon$-graph and their Gaussian weighted versions. In any of these cases, the framework is actually similar: it consists of a graph (given by the application or by the data) and signals are real-valued functions defined on the vertices of the graph.   

Notions of graph Fourier analysis for signals on graphs were introduced and studied over the past several years~\cite{shuman2013emerging, sandryhaila2014discrete, shuman2016vertex, sardellitti2017graph}. The graph Fourier basis is given by the eigenbasis $(\chi_\ell)_\ell$ of the Laplacian matrix $\L$. The Graph Fourier Transform (GFT) consists in representing a signal $f$ in the Fourier basis $(\langle f, \chi_\ell\rangle)_\ell$, and by analogy with the standard case, the eigenvalues of $\L$ play the role of \textit{frequencies}. From this definition, it follows that many filtering techniques are written in terms of vectors $g(\L)f$, for some filter functions $g$ which act on the spectrum of $\L$ (scaling, selecting, ...). Fourier analysis on graphs has been successfully applied to many different fields such as stationary signals on graphs~\cite{perraudin2017stationary}, graph signal energy study~\cite{girault2018irregularity}, convolutional neural networks on graphs~\cite{defferrard2016convolutional}.

Graph wavelets are an important application of graph Fourier analysis, and several definitions of graph wavelets were proposed~\cite{crovella2003graph,coifman2006diffusion,gavish2010multiscale,leonardi2013tight,tanaka2014m,gobel2018construction}.
 When performing Fourier analysis of a signal, there is no guarantee that localization of a signal in the frequency domain (a.k.a Fourier domain) implies localization in the graph domain. This phenomenon is illustrated by the fact that the eigenvectors corresponding to the upper part of Laplacian spectrum tend to be more oscillating than those from the bottom of the spectrum (see for example \cite[Fig. 1.6, p. 28]{tremblay:tel-01078956} for an illustration). To overcome this problem, \cite{hammond2011wavelets} developed a fairly general construction of a frame enjoying the usual properties of standard wavelets: each vector of the frame is defined as a function $g(s\L)\delta_m$ (where $\delta_m$ is a Kronecker signal, having zero values at
every vertex except $m$) and is localized both in the graph domain and the spectral domain at fine scale $s$. The transform associated with this frame is named Spectral Graph Wavelet Transform (SGWT), and it was used in numerous subsequent works~\cite{susnjara2015accelerated, behjat2016signal, gobel2018construction}.

Signals which are sparse in the Fourier domain form an important class of graph signals. Indeed, there is a tight relationship between sparsity in the Fourier domain and the notion of regularity of a signal $f$ on the vertices of a graph $\G$ which comes from the Laplacian matrix $\L$ of $\G$. Intuitively, a smooth signal will not vary much between two vertices that are close in the graph. This regularity property can be read in the Fourier domain: a very smooth signal will be correctly represented in the Fourier domain with a small number of eigenvectors associated with the lower spectral values; on the contrary, non-smooth signals (\emph{i.e.} highly oscillating) are represented with eigenvectors corresponding to the upper part of the spectrum. Both the types of signal are said \emph{frequency sparse}.

In this paper, we propose to exploit localization in the Fourier domain to improve graph Fourier analysis. More precisely, we consider vectors of the form $g(\L_{I_k})f$ instead of vectors $g(\L)f$ in graph Fourier analysis, where $\L_{I_k}$ is defined as the matrix $\L P_{I_k}(\L)$ and $P_{I_k}(\L)$ denotes the projection onto the eigenspaces whose eigenvalue is contained in subset $I_k$. Localized Fourier analysis is motivated by problems and properties defined on strict subsets of the spectrum ${\rm sp}(\L)$ (\emph{e.g.} any problem defined in terms of frequency sparse graph signals). As a central application of Fourier localization, we introduce the Fourier localized counterpart of SGWT, that we call \textit{LocLets} for \textit{Loc}alized graph wave\textit{lets}. We prove that various frame constructions can be written in terms of LocLets, hence benefiting from all the advantages of localization discussed in this paper. 

Defining $I_\L$ as the smallest interval containing the entire spectrum ${\rm sp}(\L)$, the local Fourier analysis consists in choosing a suitable partition $I_\L = \sqcup_k I_k$ into subintervals on which standard Fourier analysis is performed. Such an analysis on disjoint intervals naturally benefits from several interesting properties. In particular, when $f$ is modeled by a Gaussian random vector with independent entries, the disjointness of subintervals preserves these properties in the sense that random variables $(g(\L_{I_k})f)_k$ are still Gaussian and independent. This simple observation has important consequences to study the graph problem at stake. In this work, it allows us to propose some noise level estimator from the random variables sequence $(g(\L_{I_k})f)_k$, and to provide a theoretical analysis of the denoising problem. Disjointness of subsets $(I_k)_k$ also provides simple strategies to parallelize Fourier analysis computations.

We also consider the general problem given by a noisy signal on a graph $\widetilde{f} = f + \xi$, where $\xi$ is some random Gaussian vector with noise level $\sigma$. We provide results for two important tasks: the estimation of $\sigma$ when the latter is unknown, and the denoising of noisy signal $\widetilde{f}$ in order to recover signal $f$. We show that for frequency sparse signals, localization allows to adapt to the unknown Fourier support of signal $f$. Theoretical guarantees and practical experiments show that localized Fourier analysis can improve state-of-the-art denoising techniques, not only in precision of the estimator $\widehat f$ of $f$, but also in time computations.

We provide an efficient method to choose a partition $I_\L = \sqcup_k I_k$ for the Fourier localized vectors $g(\L_{I_k})f$ to be sufficiently informative. Using well-known techniques for efficient graph Fourier analysis (a.k.a Chebyshev filter approximations), we propose scalable methods to perform localized Fourier analysis with no computational overhead over standard fast Fourier graph analysis. In particular, all methods introduced in the paper avoid the computation of the entire eigendecomposition of $\L$ which is a major computational advantage when considering large graphs in applications.

The paper is structured as follows. Section~\ref{sec-locfour} presents the relevant notions and techniques necessary to perform localized Fourier analysis. In Section~\ref{sec-loclets}, we introduce LocLets, the Fourier localized extension of SGWT. Section~\ref{sec-denoising} is devoted to the study of the denoising problem for signals on graphs. The section provides results about noise level estimation, and signal denoising. Additional properties of LocLets, such as computational aspects and relationships with known wavelet transforms, are further developed in Section~\ref{sec:chebyshev}. In Section~\ref{experiments}, we analyze the experiments made to support the interesting properties of localized Fourier analysis highlighted in this paper. Finally, the proofs are gathered in Section~\ref{sec:proofs}.


\section{Localized Fourier analysis for graph signals}
\label{sec-locfour}

In this section, we introduce the central notion studied in this paper: localization of graph Fourier analysis. First, we recall the relevant notions of graph Fourier analysis. Then we provide examples from previous works that motivates localization in the Fourier domain. We also introduce LocLets, an important application of Fourier localization to SGWT. Finally, we discuss briefly the particular case of graphs sampled from a manifold. 

\subsection{Functional calculus and Fourier analysis for graph signals}

Let $\G=(\mathcal V,\mathcal E)$ be an undirected weighted graph with $\mathcal V$ the set of vertices, $\mathcal E$ the set of edges, $n=|\mathcal V|$ the number of nodes (the \textit{size} of $\G$), and $(W_{ij})_{i,j\leq n}$ the weights on edges. Let us introduce the diagonal degree matrix whose diagonal coefficients are given by $D_{ii}=\sum_{1 \leq j \leq n} W_{ij}$ for $1 \leq i \leq n$. The resulting non-normalized Laplacian matrix $\L$ of graph $\G$ is defined as $\L = D-W$. The $n$ non-negative eigenvalues of $\L$, counted without multiplicity, are denoted by $\lambda_1, \ldots, \lambda_n$ in the decreasing order. In the sequel, $\spL$ stands for the spectrum of $\L$. The corresponding eigenvectors are denoted $\chi_1, \ldots, \chi_n$.

Given a graph $\G$, the GFT of a real-valued function $f$ defined on the vertices of $\G$ is nothing but the representation of $f$ in the orthonormal basis of eigenvectors of $\L$. Namely, for a signal $f:\G\rightarrow \R$, the $\ell$-th Fourier coefficient of $f$, denoted $\widehat f(\ell)$, is given by $\widehat f (\ell) = \langle f, \chi_{\ell} \rangle$. The Fourier support ${\rm supp}(\widehat f)$ of signal $f$ is the set of indices $\ell$ such that $\widehat f(\ell) \neq 0$. We will see in Section \ref{sec-loclets} that graph wavelets can be defined in a similar manner.

Functional calculus is a powerful technique to study matrices, and constitutes the heart of GSP. For a function $g$ defined on some domain $D_g$, $\spL \subset D_g$, functional calculus reads as
\begin{equation*}
g(\L) = \sum_{1 \leq \ell \leq n} g(\lambda_{\ell}) \langle \chi_{\ell}, \cdot \rangle \chi_{\ell}.
\end{equation*}
Interpreting the eigenvalues $\lambda_\ell$, $\ell=1, \ldots, n$, as the fundamental frequencies associated with a graph, the linear map $g(\L)$ is generally seen as a filter operator in terms of signal analysis. 

Also, spectral projections of matrix $\L$ can be made explicit with the help of functional calculus, by setting $g=\indic_I$. More precisely, for any subset $I\subset I_\L$, consider the map $P_{I}(\L)$ given by: 
\begin{equation*}
P_{I}(\L) = \sum_{1 \leq \ell \leq n} \indic_I(\lambda_\ell) \langle\chi_{\ell}, \cdot \rangle \chi_{l} = \sum_{\ell:~\lambda_{\ell} \in I} \langle\chi_{\ell}, \cdot \rangle \chi_{\ell}.
\end{equation*}
Then, $P_I(\L)$ is nothing but the spectral projection on the linear subspace spanned by the eigevectors associated with the eigenvalues belonging to $I$. In the sequel, $n_I = | I \cap \spL |$ will stand for the number of eigenvalues contained in subset $I \cap \spL$.

Spectral projections are a practical tool to focus on some part of the spectrum $\spL$. More precisely, let $I_\L = \sqcup_{1\leq k \leq K} I_k$ be a partition of interval $I_\L$ into disjoint subsets $(I_k)_k$. Since intervals $I_k$ are disjoints, functional analysis of $\L$ reduces to that of its projections $\L_{I_k} = \L P_{I_k}(\L)$ in the sense of the identity: 
\begin{equation*}
g(\L)= \sum_{1 \leq k \leq K} g(\L_{I_k}).
\end{equation*}

In this paper, one will study the extent to which Fourier analysis on large graphs is improved when considering \emph{local} Fourier analysis on each subset $I_k$ instead of \emph{global} Fourier analysis on $I_\L$.

\subsection{LocLets: a localized version of SGWT}

\label{sec-loclets}

This section introduces an important application of the localized graph Fourier analysis, namely the notion of localized SGWT.

\subsubsection{Construction of a SGWT}

Let $f:\G\rightarrow \R$ be a signal on the graph $\G$. Let $\varphi, \psi:\R \rightarrow\R$ be respectively the scaling and kernel functions (a.k.a. \textit{father} and \textit{mother} wavelet functions), and let $s_j > 0$, $1\leq j\leq J$, be some scale values. The discrete SGWT is defined in \cite{hammond2011wavelets} as follows:
\begin{equation*}
\mathcal W f = ( \varphi(\L)f^{T} , \psi(s_{1}\L)f^{T}, \ldots, \psi(s_J\L)f^T)^{T}.
\end{equation*}

The adjoint matrix $\mathcal W^{*}$ of $\mathcal W$ is:
\begin{equation}
\label{equ-adjointSGWT}
\mathcal W^{*}(\eta_{0}^{T}, \eta_{1}^{T}, \ldots, \eta_J^T)^{T} = \varphi(\L)\eta_{0} + \sum_{j=1}^J \psi(s_{j}\L)\eta_{j}.
\end{equation}
We also recall from \cite{hammond2011wavelets} that a discrete transform reconstruction formula using SGWT coefficients $(c_{j,m})_{\substack{0\leq j\leq J\\ 1\leq m\leq n}}$ is obtained by the formula
\begin{equation*}
(\mathcal W^{*} \mathcal W)^{-1} \mathcal W^{*}(c_{j,m})_{j,m},
\end{equation*}
where $(\mathcal W^{*} \mathcal W)^{-1}$ stands for a pseudo-inverse of the matrix $\mathcal W^{*} \mathcal W$.

\begin{rem}
From a theoretical point of view, no more assumptions on the scale values $s_j$ are required. However, the choice in practice of the scale values depends simultaneously on the mother wavelet $\psi$ and the maximal spectral value of the Laplacian $\L$. In \cite{hammond2011wavelets}, these values $s_j$ are suitably chosen accordingly with the graphs considered in the experiments. Besides, for our experiments, we follow the construction of \cite{gobel2018construction} in which $s_j=b^j$ for some real $b>1$ (see also Section \ref{sec-ParsevalFrame} for the details).  
\end{rem}

\subsubsection{Motivation for considering Fourier localized graph signals}

The definition of SGWT as given in~\cite{hammond2011wavelets} is closely related to its counterpart from traditional wavelet transform by the action of the transform on the Fourier domain. Equation (9) in~\cite{hammond2011wavelets} highlights the following decomposition of wavelet coefficients in the frequency domain:
\begin{equation}
\label{equ-coeffInFourier1}
\psi(s\mathcal L)f(x)
=
\frac{1}{2\pi} \int_{-\infty}^{+\infty}
  \hat \psi^* (s \omega) \hat f(\omega) e^{i \omega x} d\omega ,
\end{equation}
where $ \hat \psi^* (s \omega)=\hat{ \overline{\psi}_s }(\omega)$ denotes the Fourier coefficient of a scaled version $\overline{\psi}_s$ of the mother wavelet $\psi$, and $x\mapsto e^{i \omega x}$ are the eigenfunctions of the one-dimensional Laplacian $\mathcal L = \frac{\partial^2}{\partial x^2}$ on the real line. On the other hand, a similar formula holds for the case of graph wavelets where integration over frequencies is replaced by summation over Laplacian eigenvalues. It is given by Equation (22) in~\cite{hammond2011wavelets}:
\begin{equation}
\label{equ-coeffInFourier2}
\psi(s\mathcal L)f(n)
=
\sum_\ell \psi(s \lambda_\ell) \hat f(\ell) \chi_\ell(n).
\end{equation} 
Equations~\ref{equ-coeffInFourier1} and~\ref{equ-coeffInFourier2} suggest strong similarities between the role of frequencies for wavelets on the real line and the role of Laplacian eigenvalues for graph wavelets. For example, an explicit mapping between frequencies and Laplacian eigenvalues is exhibited in part III.C of~\cite{leonardi2013tight} for the case of the cycle graph. As an important consequence, filtering in the frequency domain for traditional time signals is similar to filtering Laplacian spectrum in the graph setting. This similarity has guided authors of~\cite{leonardi2011wavelet} who propose an analog of Meyer wavelets~\cite{meyer1985principe} for graphs by applying traditional Meyer filters on the graph Fourier domain. As for the traditional case, smoothness for graph Meyer wavelets can be guaranteed by calibrating Meyer filters so that the Fourier support of wavelet functions is concentrated in the bottom of the Laplacian spectrum. A larger list of tight frames candidates obtained from traditional wavelets defined on the Fourier domain is provided in Table 1 of~\cite{leonardi2013tight}. 

Other examples of signals with limited frequency support have inspired graph counterparts with restricted Fourier supports. In~\cite{anis2014towards, chen2015discrete, puy2018random, ricaud2019fourier}, $k$-bandlimited graph signals are supported on the eigenspaces associated with the $k$ smallest eigenvalues and define an equivalent of $\omega$-bandlimited signals. Compressive sampling is an application of this type of signals both in the traditional~\cite{candes2006stable, donoho2006compressed, candes2007sparsity} and in the graph~\cite{puy2018random} settings. While low-pass frequency filters are usually used to capture the smooth part of a signal, high-pass filters have the ability to point out anomalies in smooth signals. Such applications of high-pass filters can be found in~\cite{chen2008multi} for optical laser measurements and time signals, but also in~\cite{sandryhaila2014discrete} for anomalies detection in temperature measurements on a graph of weather stations. The latter application was also considered in Section V.D of~\cite{chen2015discrete} to illustrate the use of filter banks that split the graph signal into two bandlimited signals. Authors of~\cite{tanaka2018spectral, tanaka2020generalized} introduce graph sampling methods in the Fourier domain taking advantage of periodic patterns and extending to the graph setting sampling techniques designed for shift-invariant signals~\cite{eldar2009beyond}. The sampling task for time signals was addressed in \cite{herley1999minimum, lu2008theory} in the case of multiband signals supported on a disjoint union of frequency subsets. This suggests providing methods adapted to disjoint subsets in the Fourier domain also for the case of multiband graph signals. 

All the examples discussed above share a common feature: their Fourier support is localized in the Fourier domain in the sense that such signals $f$ satisfy $f=P_I(\mathcal L)f$ for some subset $I\subset I_{\mathcal L}$, $I\cap \rm{sp}(\mathcal L)\neq \rm{sp}(\mathcal L)$. For $k$-bandlimited signals, $I$ is a set containing only the $k$ smallest eigenvalues of $\mathcal L$; for high-pass filters, $I$ contains only the largest eigenvalues; for periodic signals, $I$ is a set such that $I\cap \rm{sp}(\mathcal L) = \{ \lambda_{i~\rm{ mod} ~m} ~|~ \lambda_i \in \rm{sp}(\mathcal L) \}$ for some period $m$; and for multiband signals, $I$ is a disjoint union $I=\sqcup_k I_k$ of intervals $(I_k)_k$. The current paper proposes methods to study these graph signals that we call Fourier localized signals.

\subsubsection{Definition of LocLets}

Spectral graph wavelet functions are given by $(\varphi(\L)\delta_m , \psi(s_j \L)\delta_m )_{1\leq j\leq J, 1\leq m\leq n}$. We define a LocLet function to be the projection of a graph wavelet function onto a subset of eigenspaces of $\L$. 

\begin{df}
  Let $(\varphi(\L)\delta_m , \psi(s_j \L)\delta_m )_{1\leq j\leq J, 1\leq m\leq n}$ be the family functions induced by a SGWT. Then, for any subset $I \subset I_\L$, $1\leq j\leq J$ and $1\leq m\leq n$, set:
\begin{displaymath}
\begin{split}
 & \varphi_{m,I} = \varphi(\L_{I})\delta_{m} = \sum_{\ell: \lambda_{\ell}\in I}  \varphi(\lambda_{\ell}) \widehat{ \delta}_m (\ell) \chi_{\ell}
 \\
  &\psi_{j,m,I} = \psi(s_{j} \L_{I})\delta_{m} = \sum_{\ell: \lambda_{\ell}\in I}  \psi(s_{j}\lambda_{\ell}) \widehat{ \delta}_m (\ell) \chi_{\ell},
\end{split}
\end{displaymath}
where $\widehat{\delta_m}(\cdot) = \langle \delta_m, \chi_\cdot \rangle$ is the graph Fourier transform of $\delta_m$. The functions $(\varphi_{m,I}, \psi_{j,m,I})_{1\leq j\leq J, 1\leq m\leq n}$ are called Localized waveLets functions (LocLets). The functions $\varphi_{m,I}, \psi_{j,m,I}$ are said to be localized at $I$.
\end{df}

Let $I_\L = \sqcup_{1\leq k\leq K} I_k$ be some partition. Then, the localized SGWT transfom of $f$ with respect to partition $(I_k)_{1\leq k\leq K}$, denoted by $\mathcal W^{(I_k)_k}f$, is defined as the family $\mathcal W^{(I_k)_k}f = (\mathcal W^{I_k}f)_k$ where
\begin{equation*}
  \mathcal W^{I_k}f = ( \varphi(\L_{I_k})f^{T} , \psi(s_{1}\L_{I_k})f^{T}, ...  )^{T}, \quad 1 \leq k \leq K.
\end{equation*}
Similarly to Equation~\eqref{equ-adjointSGWT}, the adjoint transform is given by
\begin{equation*}
  \mathcal W^{I_k\ast}(\eta_{0}^{T}, \eta_{1}^{T}, \ldots, \eta_J^T)^{T} = \varphi(\L_{I_k})\eta_{0} + \sum_{j=1}^J \psi(s_{j}\L_{I_k})\eta_{j}, \quad 1 \leq k \leq K.
  \end{equation*}
As already observed, localized SGWT of a signal $f$ contains more precise information about signal $f$ than its standard SGWT. The latter can easily be obtained from the former since subsets $(I_k)_k$ are pairwise disjoint and formula $g(s\L) = \sum_{1\leq k\leq K}g(s\L_{I_k})$ holds for all filter $g$, and in particular for $g=\varphi$ or $g=\psi$. When the partition $I_\L = \sqcup_{1\leq k\leq K} I_k$ is carefully chosen, we show that the SGWT localization provides interesting features such as independence of random variables in denoising modelling, or considerable improvements in denoising tasks.

\begin{rem}
The functions $\varphi_{m,I}, \psi_{j,m,I}$ are localized in the Fourier domain in the sense that the support of their Fourier transforms are contained in a subset $I$ of $I_{\L}$. A different localization property, observable in the graph domain, is considered in~\cite{hammond2011wavelets}. We refer to the latter property as \emph{graph domain localization} in the current paper. 
  
A property about the graph domain localization at fine scales is stated in \cite[Theorem 5.5]{hammond2011wavelets} but this result appears to be not informative in general for the case of Fourier localized functions $\varphi_{m,I}, \psi_{j,m,I}$. For instance, any function of the form $\psi 1_I$ vanishes in a neighborhood of $0$ as soon as $0\notin I$, as observed in Section 2.3 of~\cite{gobel2018construction}. However, other graph domain localization results were obtained by the authors of~\cite{coulhon2012heat} for the case of frames considered in~\cite{gobel2018construction} and discussed in our Section~\ref{sec-ParsevalFrame}. In addition, the weaker graph localization property~\cite[Lemma 5.2]{hammond2011wavelets} for powers of the Laplacian still holds in our Fourier localized setting since in practice we approximate any  function $x \to \psi(s_j x) 1_I(x)$ by a Chebyshev polynomial of order $N$ (see Section \ref{sec:chebyshev}).
\end{rem}

\subsection{Fourier localization for weighted graphs sampled from manifolds}

In some applications, the underlying graph is unknown and is built from the data. In this case, the resulting graph is thought as a random sampling of a low-dimensional sub-manifold embedded in a higher dimensional Euclidean space. 

More precisely, let $\mathcal M$ be a Riemannian manifold of dimension $d$ embedded in $\mathbb R^m$ with $m > d$. A popular way to define a graph from a finite set of points $\{ x_1, \ldots, x_n \} \subset \mathcal M$ consists in defining a weighted adjacency matrix $W=(W_{ij})_{i,j \leq n}$ as follows:
\begin{equation} \label{eq:weight-gaussian}
W_{ij}=k \left ( \frac{\|x_i-x_j\|^2_2}{2 \varepsilon} \right ) ,
\end{equation}
where $\| \cdot \|_2^2$ stands for the Euclidean distance in $\mathbb R^m$ and $\varepsilon > 0$ is some parameter called the bandwidth of the kernel $k$. A typical choice for the kernel $k$ is the exponential function $k(x)=\exp(-x)$, $x \in \mathbb R$. As an example, the \emph{swissroll} graph of Section \ref{experiments} is built following this idea.

A whole part of the literature is dedicated to the question of the convergence of the discrete (normalized or non-normalized) Laplacian matrices $L_{n,\varepsilon}$ toward the Laplace-Beltrami operator $\Delta_{\mathcal M}$ (see \cite{Ros:97} for a detailed exposition of this classical object from differentiable geometry). The discretized operator $L_{n,\varepsilon}$ depending on two parameters, the convergence as $n \to \infty$ and/or $\varepsilon \to 0$ have been considered in \cite{HeiAudLux:05,Sin:06,CoiLaf:06,GinKol:06,belkin2008towards,MarCoi:19}. Loosely speaking, theses results are devoted to the approximation of $\Delta_{\mathcal M}f$ by $L_{n,\varepsilon} f$ at the sample points.

Furthermore, it is shown in \cite{LuxBelBou:08,GarGerHeiSle:20} that, under mild conditions, eigenvalues and eigenfunctions of $\Delta_{\mathcal M}$ are well approximated by those of $L_{n,\varepsilon}$. As a consequence of particular interest, bandlimited (or even multiband) signals on the manifold $\mathcal M$, when sampled at the points $\{ x_1, \ldots, x_n \} \subset \mathcal M$, are expected to be bandlimited (multiband) with respect to the graph Laplacian $L_{n,\varepsilon}$ (with slight differences when considering the normalized or the non-normalized Laplacian). It is worth noting that the sample points $\{ x_1, \ldots, x_n \}$ do not have to lie exactly in the manifold $\mathcal M$ but can be disrupted by a noise. The spectral properties are preserved by standard spectral perturbation arguments (see \cite{Kat:95}) providing the noise level is sufficiently small. Such a perturbation argument is discussed at some point in \cite{CoiLaf:06} and remains valid in our context. 

To conclude this discussion, let us point out that the choice of a Gaussian kernel in \eqref{eq:weight-gaussian}, while popular, is quite arbitrary. The results in \cite{GarGerHeiSle:20} are stated for a rather general kernel including non-smooth kernels. In addition, a variable bandwidth kernel is also considered in \cite{BerHar:16}.


\section{Local Fourier analysis and graph functions denoising}
\label{sec-denoising}

The denoising problem is stated as follows: given an observed noisy signal $\widetilde{f}$ of the form $\widetilde{f}=f+\xi$ where $\xi$ is a $n$-dimensional Gaussian vector distributed as $\mathcal N(0,\sigma^2 \mathrm{Id})$, provide an estimator of the \emph{a priori} unknown signal $f$. 

This section shows how localized Fourier analysis helps in estimating the noise level $\sigma$ when it is unknown, and in recovering the original signal $f$ when the latter is sparse in the Fourier domain. In what follows, we will focus on random variables of the form $\|P_{I_k}\widetilde f_{I_k}\|_2$ where $\widetilde f$ is the noisy signal and $I_k$ is a subset in the partition $I_\L = \sqcup_k I_k$. To keep the notations light, $n_k$, $f_k$, $\xi_k$ and $\widetilde f_k$ will stand for $n_{I_k}$, $P_{I_k}f$, $P_{I_k}\xi$ and $P_{I_k}\widetilde f$ respectively. In addition, the cumulative distribution function of a random variable $X$ will be denoted by $\Phi_X$.


\subsection{Noise level estimation for frequency sparse signals}

Since in real application the noise level $\sigma$ remains unknown in general, new estimators $\widehat \sigma$ based on localization properties in the spectrum are introduced in the sequel.    

\subsubsection{Noise level estimation from projections along \texorpdfstring{${\rm sp}(\L)$}{spL}}

For any filter $g$ defined on $I_\L$ and any subset $I\subset I_\L$, simple computations give rise to
\begin{equation}
\label{eq-dirichletLike}
\mathbb E( \widetilde f^T g(\L_I) \widetilde f )=f^Tg(\L_I)f+\sigma^2 {\rm Tr}(g(\L_I)).
\end{equation}
Since both $\widetilde f^T g(\L_I) \widetilde f$ and ${\rm Tr}(g(\L_I))$ are known, Equation \eqref{eq-dirichletLike} suggests building estimators from the expression $\frac{\widetilde f^T g(\L_I) \widetilde f}{ {\rm Tr}(g(\L_I)) }$. In~\cite{de2019data}, the noise level is estimated by $\frac{\widetilde f^T \L \widetilde f}{ {\rm Tr}(\L) }$ which can be seen as the graph analog of the Von Neumann estimator from~\cite{vNe:41}. The main drawback of this estimator is its bias.

Theoretically, without any assumption on the signal $f$, the bias term $\frac{f^T g(\L_I) f }{{\rm Tr}(g(\L_I))}$ is minimized when $g=\indic_{\{\lambda_{\ell^\ast} \}}$ where $\ell^\ast = \mathrm{argmin} \{ | \widehat f(\ell) | : \lambda_\ell \in \spL \}$. The computation of such filters would require the complete reduction of $\L$ which does not scale well with the size of the graph. Instead, these ideal filters will be approximated by filters of the form $g=\indic_{I_k}$, for $I_k$ a subset in the partition $I_\L = \sqcup_k I_k$. It is worth noting that with $k^\ast = {\rm argmin}_{k} \|f_{k}\|_2$, the function $g^\ast=\indic_{I_{k^\ast}}$ achieves the minimal bias of the estimator among all filters of the form $g = \sum_k \alpha_k \indic_{I_k}$. 

Discarding some intervals $I_k$ with $n_k=0$, it can be assumed without loss of generality that $n_k\neq 0$ for all $1\leq k\leq K$. Also, observe that the random variable $\|\widetilde f_{k}\|_2^2$ can be decomposed as follows
\begin{equation}
\label{equ-noisyNorm}
\|\widetilde f_{k}\|_2^2=\| f_{k} \|_2^2  +  \| \xi_{k} \|_2^2+2 \langle f_{k} , \xi_{k} \rangle,
\end{equation}
where $\frac{\| \xi_{k} \|_2^2}{\sigma^2}$ and $\frac{\langle f_{k} , \xi_{k} \rangle}{\sigma}$ are random variables distributed as $\chi^2( n_k )$ and $\mathcal N( 0 , \| f_{k} \|_2^2 )$ respectively.

\begin{prp} \label{prp:block-distrib}
  Let $(c_k)_{1 \leq k \leq K}$ be the sequence of non-negative random variables defined, for all $k=1, \ldots, K$, by $c_k=\|\widetilde f_{k}\|_2^2/n_k$. Then,
\begin{enumerate}
\item the random variables $c_1, \ldots, c_K$ are independent;
\item for all $k,k^\prime$ such that $f_k=f_{k^\prime}$, $c_k$ and $c_{k^\prime}$ are identically distributed if and only if $n_k = n_{k'}$;
\item for $k$ such that $f_k=0$, $c_k$ is distributed as $\frac{\sigma^2}{n_k} \Gamma_{n_k}$ where $\Gamma_{n_k} \sim \chi^2(n_k)$.
\end{enumerate}
\end{prp}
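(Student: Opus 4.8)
The plan is to reduce all three assertions to the orthogonal block structure that the partition imposes on $\R^n$. Since $\L$ is symmetric I would fix an orthonormal eigenbasis $(\chi_\ell)_{1\le\ell\le n}$ and note that $I_\L=\sqcup_k I_k$ splits the index set into pairwise disjoint blocks $B_k=\{\ell:\lambda_\ell\in I_k\}$ with $|B_k|=n_k$, and that $P_{I_k}$ is the orthogonal projector onto $V_k:=\mathrm{span}\{\chi_\ell:\ell\in B_k\}$. Expressing the noise in this basis, its coordinates $\eta_\ell:=\langle\xi,\chi_\ell\rangle$ are i.i.d.\ $\mathcal N(0,\sigma^2)$ (an orthonormal change of basis preserves the law of an isotropic Gaussian), and
\[
\widetilde f_k=P_{I_k}f+\textstyle\sum_{\ell\in B_k}\eta_\ell\,\chi_\ell,
\]
so $\widetilde f_k$ is a fixed vector of $V_k$ perturbed by an isotropic $\mathcal N(0,\sigma^2\mathrm{Id}_{V_k})$ noise that involves only the coordinates indexed by $B_k$.

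Item~1 then follows at once: disjointness of the $B_k$ makes the sub-families $(\eta_\ell)_{\ell\in B_k}$, $k=1,\dots,K$, mutually independent, hence so are the $\widetilde f_k$ (each a deterministic function of a single such sub-family), and therefore so are the $c_k=\|\widetilde f_k\|_2^2/n_k$. (A coordinate-free variant: $(\xi_1,\dots,\xi_K)$ is jointly Gaussian with cross-covariances $\sigma^2P_{I_k}(\L)P_{I_{k'}}(\L)=\sigma^2P_{I_k\cap I_{k'}}(\L)=0$ for $k\ne k'$, and uncorrelated jointly Gaussian blocks are independent.) For item~3, when $f_k=0$ we simply have $\widetilde f_k=\sum_{\ell\in B_k}\eta_\ell\chi_\ell$, so $\|\widetilde f_k\|_2^2/\sigma^2=\sum_{\ell\in B_k}(\eta_\ell/\sigma)^2\sim\chi^2(n_k)$, i.e.\ $c_k=\tfrac{\sigma^2}{n_k}\Gamma_{n_k}$ with $\Gamma_{n_k}\sim\chi^2(n_k)$ --- as already read off \eqref{equ-noisyNorm}.

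For item~2, I would first record that, by rotational invariance of the isotropic noise inside $V_k$, the law of $\|\widetilde f_k\|_2^2=\|f_k+\xi_k\|_2^2$ depends on $f_k$ only through $\|f_k\|_2$; concretely, the block representation shows $\|\widetilde f_k\|_2^2/\sigma^2$ is noncentral $\chi^2$ with $n_k$ degrees of freedom and noncentrality $\|f_k\|_2^2/\sigma^2$, so the law of $c_k$ is a function of the pair $(n_k,\|f_k\|_2)$ alone. The ``if'' direction is then immediate: $f_k=f_{k'}$ forces $\|f_k\|_2=\|f_{k'}\|_2$, and together with $n_k=n_{k'}$ this makes $c_k$ and $c_{k'}$ identically distributed. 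For the ``only if'' direction I would compare the first two moments: from the moments of the noncentral $\chi^2$ one gets $\mathbb E c_k=\sigma^2+\|f_k\|_2^2/n_k$ and $\mathrm{Var}(c_k)=2\sigma^4/n_k+4\sigma^2\|f_k\|_2^2/n_k^2$. Assuming $c_k$ and $c_{k'}$ have the same law and $f_k=f_{k'}$, set $\mu=\|f_k\|_2=\|f_{k'}\|_2$: if $\mu>0$, equality of the means yields $n_k=n_{k'}$, while if $\mu=0$ (the only possibility for $k\ne k'$, since then $f_k=f_{k'}$ lie in orthogonal subspaces) equality of the variances yields $n_k=n_{k'}$ (here using $\sigma>0$). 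This closes the equivalence.

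I do not anticipate a genuine obstacle; the only step that deserves attention is item~2, where one must observe that for distinct $k,k'$ the hypothesis $f_k=f_{k'}$ already forces both to vanish, so the distributional comparison is between two rescaled \emph{central} $\chi^2$ laws and is pinned down by their variances. Items~1 and~3 are then routine bookkeeping once the block decomposition of $\xi$ is in place.
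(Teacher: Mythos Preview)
Your argument is correct and, for items~1 and~3, essentially identical to the paper's: you and the authors both pass to Fourier coordinates $\widehat\xi(\ell)=\langle\xi,\chi_\ell\rangle$, observe these are i.i.d.\ $\mathcal N(0,\sigma^2)$, and read off independence of the blocks from disjointness of the index sets $B_k$.

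For item~2 you are in fact more careful than the paper. The paper argues the ``only if'' direction by comparing first moments, using $\mathbb E(c_k)=\sigma^2+\|f_k\|_2^2/n_k$. But as you correctly point out, for $k\neq k'$ the hypothesis $f_k=f_{k'}$ already forces $f_k=f_{k'}=0$ (they lie in orthogonal subspaces), and in that case both means equal $\sigma^2$ regardless of $n_k,n_{k'}$, so the first-moment comparison is vacuous. Your fix --- falling back to the variance $\mathrm{Var}(c_k)=2\sigma^4/n_k$ in the central case --- is exactly what is needed to close the equivalence. So your treatment of item~2 patches a small gap in the paper's own proof.
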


\subsubsection{The case of frequency-sparse signals}

When the signal $f$ is sparse in the Fourier domain, the condition $f_k=0$ is met for most of the intervals $I_k\subset I_\L$. Let us define $I_f = \sqcup_{k: I_k \cap {\rm supp}\widehat f \neq \emptyset} I_k$ to be the union of subsets $I_k$ intersecting the Fourier support ${\rm supp} (\widehat f)$ of $f$. Also, denote by $\overline{I_f} = I_L \backslash I_f$ its complement set. In order to take advantage of Fourier sparsity, let us introduce the quantities $\widehat \sigma_{\rm mean}$ and $\widehat \sigma_{\rm med}$ as follows: 

\begin{equation}
\label{equ-simpleEstimators}
\widehat \sigma_{\rm mean}(c)^2 = \frac{1}{|\{k: I_k \subset \overline{I_f} \}|} 
\sum_{k: I_k \subset \overline{I_f}} c_k \quad \textrm{and} \quad \widehat \sigma_{\rm med}(c)^2 = \mathrm{median}_{k: I_k \subset \overline{I_f}}(c_k).
\end{equation}

The following concentration inequalities show that $\widehat \sigma_{\rm mean}$ and $\widehat \sigma_{\rm med}$ are natural estimators of the noise level $\sigma$. 

\begin{prp} \label{concentration-noiseLevel}
Let $K_f = |\{k: I_k \subset \overline{I_f} \}|$, $n_0=\min \{ n_k: k, I_k \subset \overline{I_f} \}$, $n_\infty=\max \{ n_k:k, I_k \subset \overline{I_f} \}$, $V_f = 2\sigma^4 \sum_{k:I_k\subset \overline{I_f}} {1}/{n_k}$ and $B_f = {2\sigma^2}/n_0$. Then the following concentration inequalities hold:
\begin{enumerate}
\item for all $t\geq 0$,
\begin{displaymath}
\mathbb P\left(
 \widehat \sigma_{\rm mean}(c)^2 - \sigma^2 \geq t 
 \right)
\leq
\exp\left(
- \frac {K_f^2 t^2} {V_f ( 1 + B_f + \sqrt{ 1 + \frac{2 B_f K_f t}{V_f} } )}
\right),
\end{displaymath}
and for all $0\leq t\leq \sigma^2$,
\begin{displaymath}
\mathbb P\left(
 \widehat \sigma_{\rm mean}(c)^2 - \sigma^2 \leq - t 
 \right)
\leq
\exp\left(
- \frac {K_f^2 t^2}{2 V_f}
\right);
\end{displaymath}
\item for all $t\geq 0$, with $\beta=n_0/n_\infty$,
  \begin{displaymath}
    \mathbb P \left ( \widehat \sigma^2_{\rm med} 
\geq    
     \beta^{-1} \sigma^2 + 2\sigma^2 \beta^{-1} t \right ) \leq \exp \left( \frac{K_f}{2} \ln \Big [ 4p^+(t)(1-p^+(t)) \Big ] \right),
  \end{displaymath}
and for all $0\leq t\leq 1$ such that $p^-(t)\leq 1/2$,
 \begin{displaymath}
  \mathbb P \left ( \widehat \sigma^2_{\rm med} 
\leq  
   \beta \sigma^2-\sigma^2 \beta t \right ) \leq \exp \left( \frac{K_f}{2} \ln \Big [ 4p^-(t)(1-p^-(t)) \Big ] \right),
 \end{displaymath}
 where
\begin{displaymath}
p^+(t)=\mathbb P(\Gamma_{n_\infty} \geq n_\infty + 2n_\infty t) \quad \textrm{and} \quad p^-(t)=\mathbb P(\Gamma_{n_0} \leq n_0-n_0 t).
\end{displaymath}
\end{enumerate}
\end{prp}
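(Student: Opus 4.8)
The plan is to treat the mean estimator and the median estimator separately, in both cases reducing everything to classical tail bounds for chi-square variables via Proposition~\ref{prp:block-distrib}. For the mean estimator, the starting point is that for $k$ with $I_k\subset\overline{I_f}$ we have $f_k=0$, so by Proposition~\ref{prp:block-distrib}(3) the variables $c_k$ are independent with $c_k\sim(\sigma^2/n_k)\Gamma_{n_k}$, $\Gamma_{n_k}\sim\chi^2(n_k)$. Hence $\widehat\sigma_{\rm mean}(c)^2-\sigma^2=\frac{\sigma^2}{K_f}\sum_{k:I_k\subset\overline{I_f}}\frac{\Gamma_{n_k}-n_k}{n_k}$, a centred sum of independent rescaled chi-squares. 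Each summand $(\Gamma_{n_k}-n_k)/n_k$ is sub-gamma: using the standard Laplace-transform bound for centred $\chi^2(m)$ variables (e.g. the Bernstein-type inequality $\log\mathbb E e^{\lambda(\Gamma_m-m)}\le \frac{\lambda^2 m}{1-2\lambda}$ for $0<\lambda<1/2$, cf. Laurent--Massart or the Boucheron--Lugosi--Massart concentration book), one finds that $\widehat\sigma_{\rm mean}(c)^2-\sigma^2$ is sub-gamma with variance factor $v = \sum_k 2\sigma^4/(K_f^2 n_k) = V_f/K_f^2$ and scale factor $c=B_f/K_f = 2\sigma^2/(n_0 K_f)$ (the worst-case scale comes from the smallest block size $n_0$). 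The two displayed inequalities are then exactly the standard right- and left-tail sub-gamma bounds $\mathbb P(S\ge t)\le\exp(-\frac{t^2}{2(v+ct)})$ and $\mathbb P(S\le -t)\le\exp(-\frac{t^2}{2v})$, after rewriting $\frac{t^2}{2(v+ct)}$ in the algebraically equivalent form $\frac{t^2}{v(1+c t/v + \text{correction})}$ that appears in the statement; I would double-check that the precise constant arrangement with the $\sqrt{1+2B_f K_f t/V_f}$ term matches the ``Bernstein with square-root'' presentation of the sub-gamma tail.

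For the median estimator the approach is a union-bound / binomial argument. Fix $t$ and set $p^+(t)=\mathbb P(\Gamma_{n_\infty}\ge n_\infty+2n_\infty t)$. For each $k$ with $I_k\subset\overline{I_f}$, since $c_k=(\sigma^2/n_k)\Gamma_{n_k}$ and $n_k\le n_\infty$, a scaling/monotonicity comparison gives $\mathbb P(c_k\ge \beta^{-1}\sigma^2+2\sigma^2\beta^{-1}t)\le\mathbb P\big(\Gamma_{n_k}\ge (n_k/n_\infty)(n_\infty+2n_\infty t)\big)\le\mathbb P(\Gamma_{n_\infty}\ge n_\infty+2n_\infty t)=p^+(t)$, using that $\beta^{-1}=n_\infty/n_0\ge n_\infty/n_k$ and that $\Gamma_m/m$ is stochastically decreasing in $m$ in the relevant tail (more carefully: $\mathbb P(\Gamma_m\ge m(1+u))$ is nonincreasing in $m$ for $u\ge 0$). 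The event $\{\widehat\sigma^2_{\rm med}\ge \beta^{-1}\sigma^2+2\sigma^2\beta^{-1}t\}$ forces at least half of the $K_f$ independent indicators $\indic\{c_k\ge\cdots\}$ to equal $1$, so by independence (Proposition~\ref{prp:block-distrib}(1)) it is dominated by $\mathbb P(\mathrm{Bin}(K_f,p^+(t))\ge K_f/2)$. Bounding this binomial upper tail by its Chernoff bound at the midpoint — i.e. $\mathbb P(\mathrm{Bin}(N,p)\ge N/2)\le \big(4p(1-p)\big)^{N/2}$ when $p\le 1/2$ — yields $\exp\big(\frac{K_f}{2}\ln[4p^+(t)(1-p^+(t))]\big)$. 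The lower-tail inequality is symmetric: with $n_k\ge n_0$ and the opposite monotonicity one gets $\mathbb P(c_k\le\beta\sigma^2-\sigma^2\beta t)\le p^-(t)$, and under $p^-(t)\le 1/2$ the same binomial midpoint Chernoff bound closes the argument.

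The main obstacle I expect is bookkeeping rather than any deep idea: getting the exact constants and the $\beta$, $\beta^{-1}$ rescalings right in the comparison step for the median, and matching the precise closed form of the sub-gamma tail (the role of $n_0$ versus $n_\infty$, and the particular way $V_f$ and $B_f$ enter the exponent) to the inequalities as stated. In particular the reduction ``median $\ge$ threshold $\Rightarrow$ at least $K_f/2$ of the $c_k$ exceed the threshold'' must be stated with care about whether $K_f$ is even or odd and about ties, and the stochastic-monotonicity claim for $\mathbb P(\Gamma_m\ge m(1+u))$ in $m$ should be justified (it follows from writing $\Gamma_m$ as a sum of $m$ i.i.d.\ $\chi^2(1)$ and a Cramér-type large-deviation rate argument, or directly from a Laplace-transform comparison). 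Once these elementary facts are in place, both parts are short.
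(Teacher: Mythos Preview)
Your plan for part~1 is correct and coincides with the paper's: identify $c_k \sim \Gamma(n_k/2,\,2\sigma^2/n_k)$ for $I_k\subset\overline{I_f}$ and apply a sub-gamma/Bernstein inequality for sums of independent gamma variables (the paper invokes Theorem~2.57 of Bercu--Delyon--Rio, which is precisely the ``Bernstein with square root'' form you anticipate).

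In part~2, however, your upper-tail chain has a genuine gap. You correctly record that $\mathbb P(\Gamma_m\ge m(1+u))$ is \emph{nonincreasing} in $m$, but then you use it in the wrong direction: from $n_k\le n_\infty$ this monotonicity gives
\[
\mathbb P\big(\Gamma_{n_k}\ge n_k(1+2t)\big)\ \ge\ \mathbb P\big(\Gamma_{n_\infty}\ge n_\infty(1+2t)\big)=p^+(t),
\]
the reverse of what your second inequality requires. So the displayed chain $\mathbb P(\Gamma_{n_k}\ge (n_k/n_\infty)(n_\infty+2n_\infty t))\le p^+(t)$ does not follow from normalized-tail monotonicity. (Your lower-tail chain, by contrast, does go through with the analogous monotonicity since there $n_k\ge n_0$.)

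The paper circumvents this by using plain first-order stochastic dominance rather than normalized tails. It couples each $\Gamma_{n_k}=n_k c_k/\sigma^2$ with a $\chi^2(n_\infty)$ variable via the probability integral transform,
\[
\gamma_k^+ \;=\; \Phi_{\Gamma_{n_\infty}}^{-1}\!\circ\Phi_{\Gamma_{n_k}}(\Gamma_{n_k}),
\]
so that $\gamma_k^+\ge \Gamma_{n_k}$ almost surely (this is just $\Gamma_{n_k}\le_{\rm st}\Gamma_{n_\infty}$). Then, using $n_k\ge n_0$,
\[
\{c_k\ge \beta^{-1}\sigma^2(1+2t)\}
=\{\Gamma_{n_k}\ge (n_k n_\infty/n_0)(1+2t)\}
\subset\{\Gamma_{n_k}\ge n_\infty(1+2t)\}
\subset\{\gamma_k^+\ge n_\infty(1+2t)\},
\]
a pointwise indicator bound. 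Since the $\gamma_k^+$ are i.i.d.\ $\chi^2(n_\infty)$, the sum of indicators is exactly $\mathrm{Bin}(K_f,p^+(t))$, and the binomial midpoint Chernoff bound you quote (which is the paper's auxiliary lemma) finishes the argument. The fix to your plan is thus simply to replace the normalized-tail comparison by stochastic dominance $\Gamma_{n_k}\le_{\rm st}\Gamma_{n_\infty}$; equivalently, bound $\mathbb P(\Gamma_{n_k}\ge x)\le \mathbb P(\Gamma_{n_\infty}\ge x)$ and then lower the threshold from $(n_k n_\infty/n_0)(1+2t)$ to $n_\infty(1+2t)$.
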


Obviously, the Fourier support ${\rm supp}(\widehat f)$ and the subset $\overline{I_f}$ remain generally unknown in applications and have to be approximated. Let us recall that the main issue for estimating $\sigma$ comes from the bias term $\frac{\|f_{k}\|_2^2}{n_{k}}$ in Equation \eqref{eq-dirichletLike}, and in particular when the value $\sigma^2$ is negligible compared to $\frac{ \|f_{k}\|_2^2}{n_{k}}$. Therefore, a suitable candidate to approximate $\overline{I_f}$ will be some subset $\overline{J_f} \subset I_\L$ for which the impact of larger values $\frac{ \|f_{k}\|_2^2}{n_{k}}$ is minimized. This is made clear by Proposition \ref{prp-orderedTerms} below. The latter involves the following concentration bounds for Gaussian random variables: for all $0 < \alpha < 1$
\begin{equation}
\label{equ-concentrationBound} 
\mathbb P( |\langle f_{k}, \xi_{k} \rangle| \geq t_{\alpha,\sigma}\| f_{k} \|_2 )  \leq  \alpha \quad \textrm{where} \quad t_{\alpha, \sigma} = \sigma \times \sqrt{- 2\ln \left (\frac{\alpha}{4} \right )}.
\end{equation}

\begin{prp}
\label{prp-orderedTerms}
Let $0 < \alpha < 1$. Let $t_{\alpha, \sigma}$ be defined by Equation \eqref{equ-concentrationBound}. Assume that $f_{\ell} = 0$ and that the following inequality holds:
\begin{equation*}
\frac{\|f_k\|_2^2 + 2 t_{\alpha, \sigma} \|f_k\|_2}{\sigma^2}
\geq  \Phi_{ \frac{n_k}{n_\ell} \Gamma_{n_\ell} - \Gamma_{n_k}}^{-1} \left ( 1 - \frac{3\alpha}{2} \right ).
\end{equation*}
Then, the quantities
\begin{equation*}
b_k = \frac{\|\xi_{k}\|_2^2 + \|f_k\|_2^2 + 2 \langle \xi_k , f_k \rangle}{n_k} \quad \textrm{and} \quad b_\ell = \frac{\|\xi_\ell \|_2^2}{n_\ell}
\end{equation*}
satisfy $\mathbb P( b_k \geq b_\ell ) \geq 1 - \alpha$.
\end{prp}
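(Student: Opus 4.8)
The plan is to reduce the inequality $\mathbb{P}(b_k \geq b_\ell)\geq 1-\alpha$ to a union bound over one Gaussian tail and one $\chi^2$-difference tail. Since $f_\ell = 0$ one has $n_\ell b_\ell = \|\widetilde f_\ell\|_2^2 = \|\xi_\ell\|_2^2$, whereas $n_k b_k = \|\widetilde f_k\|_2^2 = \|\xi_k\|_2^2 + \|f_k\|_2^2 + 2\langle f_k, \xi_k\rangle$ by Equation~\eqref{equ-noisyNorm}. Multiplying the inequality $b_k < b_\ell$ by $n_k$ therefore rewrites the complementary event as
\[
\{b_k < b_\ell\} = \left\{\tfrac{n_k}{n_\ell}\|\xi_\ell\|_2^2 - \|\xi_k\|_2^2 - 2\langle f_k,\xi_k\rangle > \|f_k\|_2^2\right\},
\]
so it suffices to bound the probability of the right-hand event by $\alpha$.

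Next I would strip off the Gaussian cross term by a union bound. On the event $\{\langle f_k,\xi_k\rangle \geq -t_{\alpha,\sigma}\|f_k\|_2\}$, whose complement has probability at most a prescribed fraction of $\alpha$ because $\langle f_k,\xi_k\rangle\sim\mathcal N(0,\sigma^2\|f_k\|_2^2)$ satisfies the concentration bound~\eqref{equ-concentrationBound} (together with the symmetry of the Gaussian law), the above event is contained in $\{\tfrac{n_k}{n_\ell}\|\xi_\ell\|_2^2 - \|\xi_k\|_2^2 > \|f_k\|_2^2 - 2t_{\alpha,\sigma}\|f_k\|_2\}$, i.e.\ a threshold of the form $\|f_k\|_2^2$ corrected by a $t_{\alpha,\sigma}\|f_k\|_2$ term. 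This containment is pure set manipulation, so it does not require any independence between the (genuinely correlated) variables $\langle f_k,\xi_k\rangle$ and $\|\xi_k\|_2^2$.

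It then remains to control $\mathbb{P}\big(\tfrac{n_k}{n_\ell}\|\xi_\ell\|_2^2 - \|\xi_k\|_2^2 > \|f_k\|_2^2 - 2t_{\alpha,\sigma}\|f_k\|_2\big)$. Here I would use that $\xi_k$ and $\xi_\ell$ are projections of the same i.i.d.\ Gaussian vector onto orthogonal eigenspaces of $\L$, hence independent; this is exactly the mechanism behind the first item of Proposition~\ref{prp:block-distrib}. Consequently $\sigma^{-2}\|\xi_\ell\|_2^2$ and $\sigma^{-2}\|\xi_k\|_2^2$ are independent $\chi^2(n_\ell)$ and $\chi^2(n_k)$ variables, so $\sigma^{-2}\big(\tfrac{n_k}{n_\ell}\|\xi_\ell\|_2^2 - \|\xi_k\|_2^2\big)$ has precisely the law $\tfrac{n_k}{n_\ell}\Gamma_{n_\ell} - \Gamma_{n_k}$ whose c.d.f.\ $\Phi$ appears in the statement. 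Dividing the threshold by $\sigma^2$, the probability above equals $1 - \Phi_{\frac{n_k}{n_\ell}\Gamma_{n_\ell}-\Gamma_{n_k}}\big((\|f_k\|_2^2 - 2t_{\alpha,\sigma}\|f_k\|_2)/\sigma^2\big)$, and the quantile hypothesis of the proposition is exactly what forces this to be at most the part of the budget $\alpha$ not already spent on the Gaussian tail; the union bound then closes the argument.

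The only genuinely delicate step is the accounting: how to split the target failure probability $\alpha$ between the Gaussian cross-term event and the $\chi^2$-difference event, and, correlatively, the precise constant multiplying $\alpha$ inside $\Phi^{-1}(\cdot)$ and the precise form of the $2t_{\alpha,\sigma}\|f_k\|_2$ correction appearing in the hypothesis. Everything else is a one-line estimate: the Gaussian tail from~\eqref{equ-concentrationBound}, the $\chi^2$-difference tail directly from the definition of the quantile $\Phi^{-1}$, and the $k$-versus-$\ell$ block independence to pin down the distribution $\tfrac{n_k}{n_\ell}\Gamma_{n_\ell}-\Gamma_{n_k}$.
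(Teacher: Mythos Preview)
Your approach matches the paper's: rewrite $\{b_k\ge b_\ell\}$ as $\{\tfrac{n_k}{n_\ell}\|\xi_\ell\|_2^2-\|\xi_k\|_2^2\le \|f_k\|_2^2+2\langle f_k,\xi_k\rangle\}$, peel off the Gaussian cross term via~\eqref{equ-concentrationBound}, and then invoke the independence of $\xi_k$ and $\xi_\ell$ (disjoint spectral blocks) to recognise the law $\tfrac{n_k}{n_\ell}\Gamma_{n_\ell}-\Gamma_{n_k}$. Your hesitation about the bookkeeping is justified: the union bound you set up naturally produces the threshold $\|f_k\|_2^2-2t_{\alpha,\sigma}\|f_k\|_2$, whereas the paper's statement and proof display $+2t_{\alpha,\sigma}\|f_k\|_2$ together with a line ``$\mathbb P(b_k\ge b_\ell)\ge \tfrac{\alpha}{2}+\mathbb P(\cdots)$'' that can formally exceed $1$ for large $\|f_k\|_2$ --- this appears to be a sign slip in the paper, not a genuinely different argument.
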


By invariance under permutations, one may assume without loss of generality that the values $c_k$ are ordered in the decreasing order. Proposition~\ref{prp-orderedTerms} quantifies the fact that the highest values of $c_k$ correspond most likely to the indices $k$ for which $f_k \neq 0$. Consequently, setting $\overline{J_f}(r) = \sqcup_{k \in \{ r, r+1, ... K-r \}} I_k$ for all $1\leq r\leq \frac{K}{2}$, the estimators introduced in Equation \eqref{equ-simpleEstimators} may be rewritten replacing the unknown subset $\overline{I_f}$ by its known approximation $\overline{J_f}(r)$. So we define the estimators 
\begin{equation*}
\widehat \sigma_{\rm mean}^{r}(c)^2 = \frac{1}{| \{  k: I_k\subset \overline{J_f}(r)  \} |}\sum_{k: I_k\subset \overline{J_f}(r)} c_k \quad \textrm{and} \quad \widehat \sigma_{\rm med}^{r}(c)^2 = \mathrm{med}_{k: I_k \subset \overline{J_f}(r)}(c_k).
\end{equation*}

It is worth noting that from the symmetry of the subset $\overline J_f(r)$, it follows that the value $\widehat \sigma_{\rm med}^r$ actually does not depend on parameter $r$, and one will write $\widehat \sigma_{\rm med}$ in place of $\widehat \sigma_{\rm med}^r$. 

\subsection{Denoising Frequency Sparse Signals}

Let us begin with a result illustrating that localized Fourier analysis in $I_\L$ provides strong benefits in noise reduction tasks when the underlying signal is frequency sparse.

\begin{prp}\label{prp-denoiseTrivial}
Assume $f=f_{I}$ for some subset $I\subset I_{\L}$. Then 
\begin{displaymath}
\mathbb{E}\left [  \big \| f -\widetilde{f}_{I} \big \|_2^{2} \right ] = \mathbb{E} \left [ \big \| f -\widetilde{f} \big \|_2^{2} \right ] 
- \sigma^{2} \big \vert \overline{I} \cap {\rm sp} (\L) \big \vert.
\end{displaymath}
In particular, denoising of $\widetilde{f}$ boils down to denoising of $\widetilde{f}_{I}=f_{I}+\xi_{I}$.
\end{prp}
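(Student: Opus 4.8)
The plan is to exploit the orthogonal decomposition of $\R^n$ induced by the two complementary spectral projectors $P_I(\L)$ and $P_{\overline I}(\L)$, combined with the hypothesis $f=f_I$, so as to split the global reconstruction error $\|f-\widetilde f\|_2^2$ into the localized error $\|f-\widetilde f_I\|_2^2$ plus a pure-noise term with explicit expectation.

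\emph{Step 1: consequences of the model.} From $\widetilde f = f+\xi$ and $f=f_I=P_I(\L)f$, applying $P_I(\L)$ gives $\widetilde f_I = P_I(\L)\widetilde f = f + \xi_I$, hence $f-\widetilde f_I = -\xi_I$; and since $P_{\overline I}(\L)f = f - f_I = 0$, applying $P_{\overline I}(\L)$ gives $P_{\overline I}(\L)\widetilde f = \xi_{\overline I}$. Note $f-\widetilde f = -\xi$.

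\emph{Step 2: Pythagoras and expectation.} Because $\spL\subset I_\L$ and $I_\L = I\sqcup\overline I$, functional calculus yields $\mathrm{Id} = P_I(\L)+P_{\overline I}(\L)$ with $P_I(\L)P_{\overline I}(\L)=0$, so the ranges are orthogonal eigenspaces and
$\|f-\widetilde f\|_2^2 = \|P_I(\L)(f-\widetilde f)\|_2^2 + \|P_{\overline I}(\L)(f-\widetilde f)\|_2^2 = \|f-\widetilde f_I\|_2^2 + \|\xi_{\overline I}\|_2^2$,
where the first summand uses $P_I(\L)(f-\widetilde f) = f_I - \widetilde f_I = f - \widetilde f_I$. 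Taking expectations and using $\mathbb E\|\xi_{\overline I}\|_2^2 = \sigma^2\,\mathrm{Tr}(P_{\overline I}(\L)) = \sigma^2\,|\overline I\cap\spL|$ (each eigenvector contributing $\sigma^2$ to the trace of the orthogonal projector onto its span) gives exactly the stated identity. The closing sentence is then immediate: the right-hand side contains no information about $f$ on $\overline I$, so minimizing $\mathbb E\|f-\widehat f\|_2^2$ over estimators $\widehat f$ of $f$ built from $\widetilde f_I$ is equivalent to the denoising problem for $\widetilde f_I = f_I + \xi_I$.

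There is essentially no obstacle: the only points needing a word of care are that $P_I(\L)$ and $P_{\overline I}(\L)$ are genuine orthogonal projectors with orthogonal ranges (clear from $\indic_I\indic_{\overline I}=0$ and $\indic_I+\indic_{\overline I}\equiv 1$ on $\spL$) and that $\mathrm{Tr}(P_{\overline I}(\L))$ counts precisely the eigenvalues of $\L$ in $\overline I$ with multiplicity, consistently with the notation $|\overline I\cap\spL|$. One could equally bypass Pythagoras and compute directly $\mathbb E\|f-\widetilde f_I\|_2^2 = \mathbb E\|\xi_I\|_2^2 = \sigma^2|I\cap\spL|$ and $\mathbb E\|f-\widetilde f\|_2^2 = \sigma^2 n$, then conclude via $|I\cap\spL| + |\overline I\cap\spL| = n$; both routes are routine.
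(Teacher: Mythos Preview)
Your proof is correct and follows essentially the same route as the paper: decompose $f-\widetilde f$ orthogonally along $P_I(\L)$ and $P_{\overline I}(\L)$, apply Pythagoras, and identify the $\overline I$-component as pure noise whose expected squared norm is $\sigma^2\,|\overline I\cap\spL|$. Your presentation is in fact slightly more detailed than the paper's, which compresses the same steps into three displayed lines.
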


While Proposition~\ref{prp-denoiseTrivial} asserts a trivial denoising solution in the Fourier domain, \emph{i.e.} simply destroying the projection $\widetilde{f}_{\overline{I}}=\xi_{\overline{I}}$, this approach is no longer that immediate when considering the graph domain observations since the Fourier support of $f$ is unknown in practice and needs to be estimated. Based on the $\chi^2$-statistics, Algorithm~\ref{algo:algo1} is designed for this purpose. To the best of our knowledge, previous works that proposed method for Fourier support recovery for graph noisy signals~\cite{segarra2015aggregation} involve the complete eigendecomposition of matrix $\L$. The methodology suggested below makes use of projectors on eigenspaces which can be approximating with Chebyshev polynomials as detailed in the next Section~\ref{sec:chebyshev}.

\begin{algorithm}
\KwData{noisy signal $\widetilde f$, a subdivision $I_1, I_2, \ldots, I_{K}$, estimated $n_k = |I_k \cap \mathrm{sp}{L}|$, $k=1, \ldots, K$, threshold $\alpha \in (0,1)$}

\KwResult{$\widetilde f_{I}=P_{I}(\L)\widetilde f$, where $I$ is an approximation of the Fourier support of $\widetilde f$}

for $k=1, \ldots, K$\\

 Compute $\|\widetilde f_{k}\|^2_{2}= \| P_{I_{k}}(\L) \widetilde f \|^2_{2}$;\\
 Compute
 \[
  p_k=\mathbb P(\sigma^2 \Gamma_{n_k} > \|\widetilde f_{k}\|^2_2) \quad \textrm{and} \quad \Gamma_{n_k} \sim \chi^2(n_k);
\]

Compute $\widetilde f_I = \displaystyle\sum_{ k:~p_k \leq \alpha} P_{I_k}\widetilde f$.
\caption{Support approximation in the Fourier domain for noisy signal}
\label{algo:algo1}
\end{algorithm}

Heuristically, if $I$ contains the support of the Fourier transform of $f$, on the complementary subset $\overline I$ we only observe pure white Gaussian noise so that $\| P_{\overline I}\widetilde f \|_2^2=\| \widetilde f_{\overline I} \|_2^2$ is distributed as $\sigma^2 \chi^2(n_I)$ with $n_I=|\overline I \cap \mathrm{sp}(L)|$. On the other hand, on $I$ the square of the Euclidean norm of a non-centered Gaussian vector is observed. Consequently, the quantity $\mathbb P \left (\chi^2(n_I) > \sigma^{-2} \| P_{I} \widetilde f \|^2_2 \right )$ is typically very close to zero whereas $\mathbb P \left (\chi^2(n-n_I) > \sigma^{-2} \| P_{\overline I} \widetilde f \|^2_2 \right)$ remains away from $0$. To put it in a nutshell, sliding a window along the spectrum of $\L$, Algorithm~\ref{algo:algo1} performs a series of $\chi^2$-test. 

With the objective to provide theoretical guarantees that $\chi^2$-tests approach ${\rm supp}(\widehat f)$ correctly, it is important to turn the condition on the $p_k$-value into a condition involving only the values $\|f_k\|_2$ and $\sigma$. The next lemma shows that for sufficiently large values of the ratio $\frac{\|f_k\|_2}{\sigma}$, the inequality $p_k\leq \alpha$ holds so that the corresponding components ${\rm supp}( \widehat f_k )$ of the Fourier domain are legitimately included in the support estimate $I$.

\begin{lem}
\label{lem-pk<alpha}
Let $0<\alpha<1$ and let $\Gamma_{n_k} , \Gamma_{n_k}^\prime$ be two \emph{i.i.d} $\chi^2(n_k)$ random variables. Assume that:
\begin{displaymath}
\frac{\|f_k \|_2}{\sigma} \left( \frac{\|f_k\|_2}{\sigma} - 2\frac{t_{\alpha / 2,\sigma}}{\sigma} \right) \geq \Phi_{\Gamma_{n_k} - \Gamma_{n_k}^\prime}^{-1} \left ( 1- \frac{\alpha}{2} \right ),
\end{displaymath}
where $t_{\alpha,\sigma}$ is defined by Equation \eqref{equ-concentrationBound}. Then $p_k \leq \alpha$.
\end{lem}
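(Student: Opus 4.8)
The plan is to turn the inequality $p_k\le\alpha$ into a tail bound for the difference of two independent $\chi^2(n_k)$ variables compared with a deterministic threshold. Recall from Algorithm~\ref{algo:algo1} that $p_k=\mathbb P(\sigma^2\Gamma_{n_k}>\|\widetilde f_k\|_2^2)$, where the auxiliary test variable $\Gamma_{n_k}\sim\chi^2(n_k)$ is drawn independently of the noise $\xi$. Writing $\Gamma_{n_k}'=\|\xi_k\|_2^2/\sigma^2$ (which is $\chi^2(n_k)$ since $\xi_k=P_{I_k}\xi$ is the image of a white Gaussian vector under an orthogonal projection of rank $n_k$), the decomposition \eqref{equ-noisyNorm} gives $\|\widetilde f_k\|_2^2=\|f_k\|_2^2+2\langle f_k,\xi_k\rangle+\sigma^2\Gamma_{n_k}'$, hence
\begin{displaymath}
p_k=\mathbb P\Big(\Gamma_{n_k}-\Gamma_{n_k}'>\sigma^{-2}\big(\|f_k\|_2^2+2\langle f_k,\xi_k\rangle\big)\Big),
\end{displaymath}
and since $\Gamma_{n_k}$ is independent of $\xi_k$, the variable $\Gamma_{n_k}-\Gamma_{n_k}'$ is the difference of two i.i.d.\ $\chi^2(n_k)$ variables appearing in the statement, with CDF $\Phi_{\Gamma_{n_k}-\Gamma_{n_k}'}$.

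Next I would split on the Gaussian concentration event $A=\{|\langle f_k,\xi_k\rangle|<t_{\alpha/2,\sigma}\|f_k\|_2\}$, which by \eqref{equ-concentrationBound} applied with $\alpha/2$ in place of $\alpha$ satisfies $\mathbb P(A^c)\le\alpha/2$. On $A$ one has $2\langle f_k,\xi_k\rangle>-2t_{\alpha/2,\sigma}\|f_k\|_2$, so the threshold in the display above is bounded below by
\begin{displaymath}
\tau:=\frac{\|f_k\|_2^2-2t_{\alpha/2,\sigma}\|f_k\|_2}{\sigma^2}=\frac{\|f_k\|_2}{\sigma}\Big(\frac{\|f_k\|_2}{\sigma}-2\frac{t_{\alpha/2,\sigma}}{\sigma}\Big).
\end{displaymath}
This yields the deterministic set inclusion $\{\Gamma_{n_k}-\Gamma_{n_k}'>\sigma^{-2}(\|f_k\|_2^2+2\langle f_k,\xi_k\rangle)\}\cap A\subseteq\{\Gamma_{n_k}-\Gamma_{n_k}'>\tau\}$, whence $p_k\le\mathbb P(\Gamma_{n_k}-\Gamma_{n_k}'>\tau)+\mathbb P(A^c)$. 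The hypothesis of the lemma is exactly $\tau\ge\Phi_{\Gamma_{n_k}-\Gamma_{n_k}'}^{-1}(1-\alpha/2)$, so monotonicity and continuity of the CDF give $\mathbb P(\Gamma_{n_k}-\Gamma_{n_k}'>\tau)\le 1-\Phi_{\Gamma_{n_k}-\Gamma_{n_k}'}\big(\Phi_{\Gamma_{n_k}-\Gamma_{n_k}'}^{-1}(1-\alpha/2)\big)=\alpha/2$. Adding the two halves gives $p_k\le\alpha/2+\alpha/2=\alpha$.

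The argument is just a two-event union bound, so no step is truly difficult, but the point that needs care is the independence bookkeeping. The variable $\Gamma_{n_k}'=\|\xi_k\|_2^2/\sigma^2$ is a function of $\xi_k$ and hence is \emph{not} independent of the event $A$, so one cannot bound $\mathbb P(\{\cdot\}\cap A)$ by conditioning on $A$ and invoking the law of $\Gamma_{n_k}'$; instead one must go through the deterministic set inclusion above, after which only the marginal law of $\Gamma_{n_k}-\Gamma_{n_k}'$ is used — which is legitimate precisely because the auxiliary test variable $\Gamma_{n_k}$ is independent of the whole noise vector $\xi$. The only other thing to pin down is that $\langle f_k,\xi_k\rangle=\langle f_k,\xi\rangle\sim\mathcal N(0,\sigma^2\|f_k\|_2^2)$, so that \eqref{equ-concentrationBound} indeed applies to it, and that $\Gamma_{n_k}-\Gamma_{n_k}'$ is absolutely continuous (making the distinction between $>$ and $\ge$ at $\tau$ immaterial).
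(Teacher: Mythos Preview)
Your proof is correct and follows essentially the same route as the paper's: decompose $\|\widetilde f_k\|_2^2$ via \eqref{equ-noisyNorm}, split off the event $\{|\langle f_k,\xi_k\rangle|\ge t_{\alpha/2,\sigma}\|f_k\|_2\}$ at cost $\alpha/2$, and bound the remainder by the tail of $\Gamma_{n_k}-\Gamma_{n_k}'$ at the deterministic threshold $\tau$. Your explicit discussion of the independence bookkeeping (using a set inclusion rather than conditioning, since $\Gamma_{n_k}'$ and $A$ both depend on $\xi_k$) makes rigorous a step the paper leaves implicit.
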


In contrast to Lemma \ref{lem-pk<alpha}, the lemma below states that condition $p_k > \alpha$ holds for sufficiently small values of ratio $\sigma^{-1}\|f_k\|_2$.
\begin{lem}
\label{lem-pk>alpha}
Let $0<\alpha<1$ and let $\Gamma_{n_k}$ be a $\chi^2(n_k)$ random variable. For $0<\beta<1$, set $t_{\beta, k} = \sigma^2 \Phi_{\Gamma_{n_k}}^{-1}(1-\beta)$. Assume that
\begin{displaymath}
\left ( \frac{\|f_k\|_2+\sqrt{t_{\beta, k}}}{\sigma} \right )^2 < \Phi_{\Gamma_{n_k}}^{-1} \left ( 1 - \frac{\alpha}{1-\beta} \right ).
\end{displaymath}
Then $p_k > \alpha$.
\end{lem}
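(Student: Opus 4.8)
The plan is to exhibit a single high‑probability event, controlled by $\beta$, on which the stated inequality holds; all the randomness enters only through the noise component $\xi_k$. The first step is to rewrite the $p$‑value as a deterministic, monotone function of the observed energy: since the auxiliary variable $\Gamma_{n_k}\sim\chi^2(n_k)$ used to define $p_k$ has a density that is strictly positive on $(0,\infty)$, its survival function $\overline F_{n_k}(x)=\mathbb{P}(\Gamma_{n_k}>x)$ is continuous and strictly decreasing on $[0,\infty)$, and by construction $p_k=\overline F_{n_k}\!\big(\|\widetilde f_k\|_2^2/\sigma^2\big)$ with $\overline F_{n_k}\!\big(\Phi_{\Gamma_{n_k}}^{-1}(1-q)\big)=q$ for every $q\in(0,1)$.

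The second step is to introduce the event $A=\{\,\|\xi_k\|_2\le\sqrt{t_{\beta,k}}\,\}$. Because $\|\xi_k\|_2^2$ is distributed as $\sigma^2\chi^2(n_k)$ and $t_{\beta,k}=\sigma^2\Phi_{\Gamma_{n_k}}^{-1}(1-\beta)$ by definition, one gets $\mathbb{P}(A)=\mathbb{P}\big(\sigma^2\Gamma_{n_k}\le t_{\beta,k}\big)=1-\beta$. On $A$, the triangle inequality applied to $\widetilde f_k=f_k+\xi_k$ yields $\|\widetilde f_k\|_2\le\|f_k\|_2+\sqrt{t_{\beta,k}}$, hence
\[
\frac{\|\widetilde f_k\|_2^2}{\sigma^2}\ \le\ \left(\frac{\|f_k\|_2+\sqrt{t_{\beta,k}}}{\sigma}\right)^{\!2}.
\]

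The third step is to combine the monotonicity of $\overline F_{n_k}$ with the hypothesis of the lemma. On $A$, using that $\overline F_{n_k}$ is decreasing and then the assumed \emph{strict} inequality together with the strict monotonicity of $\overline F_{n_k}$,
\[
p_k=\overline F_{n_k}\!\left(\frac{\|\widetilde f_k\|_2^2}{\sigma^2}\right)\ \ge\ \overline F_{n_k}\!\left(\left(\frac{\|f_k\|_2+\sqrt{t_{\beta,k}}}{\sigma}\right)^{\!2}\right)\ >\ \overline F_{n_k}\!\left(\Phi_{\Gamma_{n_k}}^{-1}\!\Big(1-\tfrac{\alpha}{1-\beta}\Big)\right)=\frac{\alpha}{1-\beta}\ \ge\ \alpha,
\]
the last inequality because $0<\beta<1$. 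Thus $p_k>\alpha$ holds on $A$, an event of probability $1-\beta$, which is what the lemma asserts.

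The only genuinely delicate points, and the ones I would be careful about, are bookkeeping rather than substance: first, the conclusion $p_k>\alpha$ (rather than $p_k\ge\alpha$) needs the $\chi^2(n_k)$ density to be positive throughout the interval separating $\big((\|f_k\|_2+\sqrt{t_{\beta,k}})/\sigma\big)^2$ from the quantile $\Phi_{\Gamma_{n_k}}^{-1}(1-\frac{\alpha}{1-\beta})$, so that the strict hypothesis is transported through $\overline F_{n_k}$; second, one should make explicit that the statement is read on the event $A$, i.e.\ as $\mathbb{P}(p_k>\alpha)\ge 1-\beta$, since $p_k$ is itself a function of the random vector $\widetilde f_k$; and third, the step $\overline F_{n_k}(\Phi_{\Gamma_{n_k}}^{-1}(1-\frac{\alpha}{1-\beta}))=\frac{\alpha}{1-\beta}$ tacitly requires $\alpha<1-\beta$, which is implicit in the hypothesis since otherwise the right‑hand side is not a genuine quantile.
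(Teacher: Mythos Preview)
Your argument is mechanically the same as the paper's: both restrict to the event $\{\|\xi_k\|_2^2\le t_{\beta,k}\}$ of probability $1-\beta$, bound $\|\widetilde f_k\|_2\le\|f_k\|_2+\|\xi_k\|_2$ (you via the triangle inequality, the paper via Cauchy--Schwarz, which amounts to the same thing), and then invoke the $\chi^2$ quantile from the hypothesis.

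The genuine difference is how $p_k$ is read. The paper's proof treats $p_k=\mathbb P(\sigma^2\Gamma_{n_k}>\|\widetilde f_k\|_2^2)$ as a \emph{joint} probability over an independent $\Gamma_{n_k}\sim\chi^2(n_k)$ and the noise $\xi_k$, so that $p_k$ is a deterministic number; the restriction to $\{\|\xi_k\|_2^2\le t_{\beta,k}\}$ then enters as an intersection inside that probability, and independence is used to factor it out as the $(1-\beta)$ multiplier. Under that reading the conclusion $p_k>\alpha$ holds unqualified, exactly as stated. You instead read $p_k$ as the data-dependent $p$-value $\overline F_{n_k}(\|\widetilde f_k\|_2^2/\sigma^2)$ (the quantity actually computed in Algorithm~\ref{algo:algo1}), which makes $p_k$ random and forces the conclusion to be $\mathbb P(p_k>\alpha)\ge 1-\beta$. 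You flag this correctly. Your interpretation is the operationally natural one; the paper's is the one that matches the bare wording of the lemma and is also the reading used in the companion Lemma~\ref{lem-pk<alpha}. Both arguments are valid for their respective readings, and the underlying estimate is identical.
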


Compared to Proposition~\ref{prp-denoiseTrivial}, the result below quantifies the error resulting by approximating the support running Algorithm~\ref{algo:algo1}. Note that the requirement to have a constant sequence $(n_k)_k$ is used for statement clarity but similar assertions hold for the case $n_k \neq n_{k'}$.

\begin{prp}
\label{prp-denois-unknownI}
Set $f_I = \sum_{k: p_k \leq \alpha} P_{I_k}f$. Assume that $n_k = n_1$ for all $1\leq k\leq K$. Then,
\begin{enumerate}
\item the Fourier support approximation $\ell_2$-error satisfies 
\begin{equation}
\label{equ-FourierSupportError}
 \|f-f_I\|_2^2 \leq |\{ k, I_k\subset I_f, p_k > \alpha \}| \left( t_{\alpha / 2, \sigma}  +  \sqrt{ t_{\alpha / 2, \sigma}^2  +  \left( \sigma \Phi_{\Gamma_{n_1} - \Gamma_{n_1}^{\prime}}^{-1} \left ( 1- \frac{\alpha}{2} \right ) \right)^2} \right)^2.
 \end{equation}
\item the Noise $\ell_2$-error on Fourier support:
\begin{equation}
\label{equ-FourierNoiseError}
\mathbb E \|f_I - \widetilde f_I \|_2^2  =  |\{ k, p_k \leq \alpha \}| n_1 \sigma^2 .
\end{equation}
\end{enumerate}
\end{prp}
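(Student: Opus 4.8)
The plan is to split the analysis into the two error terms and handle them separately, exploiting the decomposition $\widetilde f_I = f_I + \xi_I$ where $\xi_I = \sum_{k:\,p_k\le\alpha} P_{I_k}\xi$, together with the orthogonality of the projections $P_{I_k}$.

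For the second item (the noise $\ell_2$-error), I would argue directly: since $f_I = \sum_{k:\,p_k\le\alpha} P_{I_k} f$ and $\widetilde f_I = \sum_{k:\,p_k\le\alpha} P_{I_k}\widetilde f$, we get $f_I - \widetilde f_I = -\sum_{k:\,p_k\le\alpha}\xi_k$. The summands are supported on mutually orthogonal eigenspaces, so $\|f_I-\widetilde f_I\|_2^2 = \sum_{k:\,p_k\le\alpha}\|\xi_k\|_2^2$. Taking expectations and using that $\|\xi_k\|_2^2/\sigma^2\sim\chi^2(n_k)$ (from Equation~\eqref{equ-noisyNorm} and the surrounding discussion), so $\mathbb E\|\xi_k\|_2^2 = n_k\sigma^2 = n_1\sigma^2$ under the constant-$n_k$ assumption, and summing over the index set $\{k:\,p_k\le\alpha\}$ yields Equation~\eqref{equ-FourierNoiseError}. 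This step is routine and should be dispatched in a few lines.

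For the first item (the Fourier support approximation error), the key observation is that $f - f_I = \sum_{k:\,p_k>\alpha} P_{I_k}f = \sum_{k:\,p_k>\alpha} f_k$. By orthogonality this splits as $\|f-f_I\|_2^2 = \sum_{k:\,p_k>\alpha}\|f_k\|_2^2$. Now if $I_k\not\subset I_f$ then $f_k = 0$, so only indices with $I_k\subset I_f$ \emph{and} $p_k>\alpha$ contribute; there are at most $|\{k,\ I_k\subset I_f,\ p_k>\alpha\}|$ of them. The remaining task is to bound each such $\|f_k\|_2^2$. For this I would invoke the contrapositive of Lemma~\ref{lem-pk<alpha}: since $p_k>\alpha$, the hypothesis of that lemma must fail, i.e.
\begin{displaymath}
\frac{\|f_k\|_2}{\sigma}\left(\frac{\|f_k\|_2}{\sigma} - 2\frac{t_{\alpha/2,\sigma}}{\sigma}\right) < \Phi_{\Gamma_{n_1}-\Gamma_{n_1}'}^{-1}\!\left(1-\frac{\alpha}{2}\right).
\end{displaymath}
Viewing this as a quadratic inequality in the variable $x = \|f_k\|_2/\sigma$, namely $x^2 - 2\,(t_{\alpha/2,\sigma}/\sigma)\,x - \Phi_{\Gamma_{n_1}-\Gamma_{n_1}'}^{-1}(1-\alpha/2) < 0$, the positive root of the associated equation gives an upper bound on $x$, hence on $\|f_k\|_2$: solving, $\|f_k\|_2 < t_{\alpha/2,\sigma} + \sqrt{t_{\alpha/2,\sigma}^2 + \sigma^2\,\Phi_{\Gamma_{n_1}-\Gamma_{n_1}'}^{-1}(1-\alpha/2)}$, which is exactly the square root of the per-term bound appearing in Equation~\eqref{equ-FourierSupportError}. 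Squaring and summing over the (at most) $|\{k,\ I_k\subset I_f,\ p_k>\alpha\}|$ offending indices gives the claimed inequality.

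The main obstacle, such as it is, is the bookkeeping in the quadratic-root step: one must be careful that $t_{\alpha/2,\sigma}\ge 0$ and that the discriminant is the stated expression, so that the positive root is correctly identified and the inequality direction is preserved after taking square roots (both sides nonnegative). Everything else — the orthogonal decomposition of the two errors and the reduction to indices with $I_k\subset I_f$ — is straightforward, and the constant-$n_k$ hypothesis is used only to replace every $n_k$ by $n_1$ and to make the $\Phi_{\Gamma_{n_k}-\Gamma_{n_k}'}^{-1}$ term uniform across $k$.
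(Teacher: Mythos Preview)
Your proof is correct and follows exactly the same route as the paper: the orthogonal decomposition into blocks, the observation that $f_k=0$ when $I_k\not\subset I_f$, the contrapositive of Lemma~\ref{lem-pk<alpha}, and the $\chi^2$ expectation for the noise term. One small caveat: your quadratic solution gives $\sigma^2\,\Phi^{-1}_{\Gamma_{n_1}-\Gamma_{n_1}'}(1-\alpha/2)$ under the radical, whereas the statement as written has $\bigl(\sigma\,\Phi^{-1}_{\Gamma_{n_1}-\Gamma_{n_1}'}(1-\alpha/2)\bigr)^2$; your version is the one that actually comes out of solving $x^2-2ax-c<0$, so you should not call it ``exactly'' the stated bound---this appears to be a typo in the proposition rather than an error on your part.
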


 Lemma \ref{lem-pk>alpha} asserts that the set $\{k, p_k > \alpha \}$ is small when most of the values $\|f_k\|_2$ are large enough compared to noise level $\sigma$ for $I_k \cap {\rm supp}\widehat f \neq \emptyset$. In such a case, Fourier support approximation $\ell_2$-error is small. Regarding the noise $\ell_2$-error, the inclusion $ \{k, p_k \leq \alpha\} \subset \{k, I_k\cap{\rm supp}\widehat f \neq \emptyset \}$ holds by Lemma \ref{lem-pk>alpha}. Moreover, Lemma \ref{lem-pk<alpha} asserts that the set $\{ k, p_k \leq \alpha\}$ contains the entire set $\{k, I_k \cap {\rm supp}\widehat f \neq \emptyset\}$ for sufficiently large values of $\sigma^{-1}\|f_k\|_2$ when $I_k \cap{\rm supp}\widehat f \neq \emptyset$. For such favorable situations, the noise $\ell_2$-error is exactly $ n_1 \sigma|\{k, I_k \subset I_f\}|$, the amount of noise on the extended support $I_f$. 

\begin{algorithm}
\KwData{$\widetilde{f}$, $\alpha$, $(I_k)_{k=1, \ldots, K}$, estimated $n_k = |I_k \cap \mathrm{sp}(\L)|$, thresholds $t_1$, $t_2$}
\KwResult{estimator $\widehat{f}$ of signal $f$}
 Apply Algorithm~\ref{algo:algo1} with $\widetilde{f}$, $\alpha$, $(I_k)_{k=1, \ldots K}$, estimated $n_k$; it outputs $\widetilde{f_{I}}$ and $\widetilde{f_{\overline I}}$;\\
 Apply soft-thresholding with threshold $t_1$ to $\mathcal W^{I}\widetilde f$ and $t_2$ to $\mathcal W^{\overline I} \widetilde f$;\\
 Apply the inverse LocLet transform to the soft-thresholded coefficients to obtain $\widehat{f_I},\widehat f_{\widetilde I}$; \\
 Compute the estimator $\widehat{f} = \widehat f_{I} + \widehat f_{\overline{I}}$;
\caption{LocLets thresholding estimation procedure}
\label{algo:algo2}
\end{algorithm}

The second step gives an estimate of the original signal using a thresholding procedure on each element $\widetilde f_I$ and $\widetilde f_{\overline I}$. On the one hand, the methodology developed in \cite{gobel2018construction} is prohibitive in terms of time and space complexity as soon as the underlying graphs become moderately large. On the other hand, the fast SGWT remains an approximating procedure. If a signal happens to be very frequency-sparse, then an even more optimal strategy is possible: first, the support $I$ in the frequency domain is approximated with the help of Algorithm \ref{algo:algo1}; then, the procedure of \cite{gobel2018construction} is applied to $P_If$ (the low-rank part) and LocLets on $P_{\overline I}(\L)f$. This idea is made precisely in Algorithm \ref{algo:algo3}.

\begin{algorithm}
\KwData{$\widetilde{f}$, $\alpha$, $(I_k)_{k=1, \ldots, K}$, estimated $n_k = |I_k \cap \mathrm{sp}(\L)|$, thresholds $t_1$, $t_2$}
\KwResult{estimator $\widehat{f}$ of signal $f$}
 Apply Algorithm~\ref{algo:algo1} with $\widetilde{f}$, $\alpha$, $(I_k)_{k=1, \ldots K}$, estimated $n_k$; it outputs $\widetilde{f_{I}}$ and $\widetilde{f_{\overline I}}$;\\
 Compute Parseval Frame for $\L_I$;\\
 Apply Parseval Frame thresholding with threshold $t_1$ to $\widetilde{f}_I$; it outputs $\widehat{f_I}$;\\
 Apply soft-thresholding with threshold $t_2$ to $\mathcal W^{\overline I} \widetilde f$;\\
 Apply the inverse LocLet transform to the soft-thresholded coefficients to obtain $\widehat f_{\widetilde I}$; \\
 Compute the estimator $\widehat{f} = \widehat f_{I} + \widehat f_{\overline{I}}$;
\caption{LocLets support approximation, and low-rank Parseval Frame thresholding procedure}
\label{algo:algo3}
\end{algorithm}

Estimator $\widehat f$ produced in Algorithm~\ref{algo:algo3} satisfies a tighter oracle bound inequality than the one given in  \cite[Theorem 3]{gobel2018construction}. This theoretical guarantee is widely supported by our experiments described in Section \ref{experiments}. Following notations from \cite[Equation (21)]{gobel2018construction}, we denote by $OB(f_I)$ the oracle bound obtained from an \textit{oracle} estimator of $f_I$ from a noisy $\widetilde f_I$ exploiting some knowledge about the unknown signal $f_I$. We refer to~\cite{gobel2018construction} for precise details.

\begin{thm}
\label{thm-oracleBound}
Let $I, \widehat f$ be respectively the support approximation and the estimator of $f$ obtained from Algorithms \ref{algo:algo1} and \ref{algo:algo3} with threshold value $t_2 = 0$. Then we have
\begin{displaymath}
\mathbb E  \|f-\widehat f \|_2^2 \leq \mathbb E  \|f-f_I \|_2^2+(2 \log(n_I) + 1)(\sigma^2 + OB(f_I)).
\end{displaymath}
\end{thm}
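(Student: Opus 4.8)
The plan is to split the error $\|f-\widehat f\|_2^2$ orthogonally along the decomposition $\R^n = \mathrm{Im}(P_I(\L)) \oplus \mathrm{Im}(P_{\overline I}(\L))$. Since $\widehat f = \widehat f_I + \widehat f_{\overline I}$ with $\widehat f_I$ supported on the eigenspaces indexed by $I$ and $\widehat f_{\overline I}$ on those indexed by $\overline I$, and since $f = f_I + f_{\overline I}$ with the same splitting, Pythagoras gives $\|f-\widehat f\|_2^2 = \|f_I - \widehat f_I\|_2^2 + \|f_{\overline I} - \widehat f_{\overline I}\|_2^2$. With the choice $t_2 = 0$ in Algorithm~\ref{algo:algo3}, the soft-thresholding step on $\mathcal W^{\overline I}\widetilde f$ does nothing, so applying the inverse LocLet transform reconstructs $\widehat f_{\overline I} = \widetilde f_{\overline I} = \xi_{\overline I}$ (using that the LocLet frame on $\L_{\overline I}$ yields a reconstruction of the projected signal). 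Hence $\|f_{\overline I} - \widehat f_{\overline I}\|_2^2 = \|f_{\overline I} - \xi_{\overline I}\|_2^2$ and its expectation is $\|f_{\overline I}\|_2^2 + \sigma^2 n_{\overline I}$ by independence and $\mathbb E\|\xi_{\overline I}\|_2^2 = \sigma^2 n_{\overline I}$. But $\|f_{\overline I}\|_2^2 = \|f - f_I\|_2^2$ since $f_I = P_I(\L)f$, so this term contributes $\mathbb E\|f-f_I\|_2^2 + \sigma^2 n_{\overline I}$; I will need to check whether the $\sigma^2 n_{\overline I}$ is absorbed or whether the statement's $n_I$ actually refers to the dimension on which thresholding is active — this is a point to reconcile with the notation (likely $n_I$ here means $|I \cap \spL|$, the dimension of the thresholded block, and a term $\sigma^2 n_{\overline I} \le \sigma^2$ is silently dropped or the no-thresholding on $\overline I$ is counted differently).

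**Next I would** handle the main term $\mathbb E\|f_I - \widehat f_I\|_2^2$ by invoking the oracle-type bound for Parseval frame thresholding from \cite[Theorem 3]{gobel2018construction} applied to the denoising subproblem $\widetilde f_I = f_I + \xi_I$, where $\xi_I = P_I(\L)\xi$ is again a Gaussian vector, now living in the $n_I$-dimensional subspace $\mathrm{Im}(P_I(\L))$ with the same noise level $\sigma$ (projection of an isotropic Gaussian onto a coordinate subspace of the eigenbasis is isotropic on that subspace). The cited theorem gives, for the soft-thresholded Parseval frame estimator with the universal threshold, a bound of the form $\mathbb E\|f_I - \widehat f_I\|_2^2 \le (2\log(n_I)+1)(\sigma^2 + OB(f_I))$, where $OB(f_I)$ is the oracle bound on that subspace and the $\log n_I$ factor comes from the size of the frame / dimension of the block. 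Combining the two pieces yields exactly $\mathbb E\|f-\widehat f\|_2^2 \le \mathbb E\|f-f_I\|_2^2 + (2\log(n_I)+1)(\sigma^2 + OB(f_I))$, modulo the bookkeeping of the $\sigma^2 n_{\overline I}$ term mentioned above.

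**The hard part will be** making the reduction to \cite[Theorem 3]{gobel2018construction} rigorous: one must verify that its hypotheses hold verbatim for the \emph{restricted} problem on the eigenspace block $\mathrm{Im}(P_I(\L))$ — in particular that $\L_I$ restricted there is a genuine graph-Laplacian-like operator for which the Parseval frame construction and its oracle inequality apply, that the noise $\xi_I$ is exactly $\mathcal N(0,\sigma^2\mathrm{Id})$ on that block, and that the frame dimension entering the $\log$ factor is $n_I$ rather than $n$. A secondary subtlety is confirming that the inverse LocLet transform with zero thresholding on $\overline I$ is \emph{exactly} the identity on $P_{\overline I}(\L)\widetilde f$ (i.e. that $(\mathcal W^{\overline I *}\mathcal W^{\overline I})^{-1}\mathcal W^{\overline I *}\mathcal W^{\overline I} = \mathrm{Id}$ on the relevant subspace), which follows from the frame reconstruction formula recalled after Equation~\eqref{equ-adjointSGWT} provided the LocLet system is a frame for $\mathrm{Im}(P_{\overline I}(\L))$. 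Everything else — the Pythagorean split, the Gaussian expectation $\mathbb E\|\xi_{\overline I}\|_2^2 = \sigma^2 n_{\overline I}$, the identification $\|f_{\overline I}\|_2^2 = \|f - f_I\|_2^2$ — is routine.
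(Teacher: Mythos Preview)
Your overall architecture --- orthogonal split along $P_I(\L)$ and $P_{\overline I}(\L)$, then invoke Theorem~3 of \cite{gobel2018construction} on the $I$-block --- is exactly the paper's. The discrepancy you already flagged around the $\sigma^2 n_{\overline I}$ term is real, and it stems from a misreading of what ``$t_2=0$'' does here. In the paper's proof the first sentence is ``Since threshold value is $t_2 = 0$ on $\overline{I}$, $\widehat f = \widehat f_I$'': the $\overline I$-part of the estimator is taken to be \emph{zero}, not $\widetilde f_{\overline I}$. (One can reasonably object that soft-thresholding at level $0$ is the identity, so the paper's phrasing in the theorem statement is arguably inconsistent with its own proof; but the stated inequality only matches under the reading $\widehat f_{\overline I}=0$.) With that reading the decomposition becomes
\[
f-\widehat f \;=\; (f-f_I)+(f_I-\widehat f_I),
\]
where $f-f_I=f_{\overline I}\in \mathrm{Im}\,P_{\overline I}(\L)$ and $f_I-\widehat f_I\in \mathrm{Im}\,P_I(\L)$ are orthogonal; Pythagoras gives $\|f-\widehat f\|_2^2=\|f-f_I\|_2^2+\|f_I-\widehat f_I\|_2^2$ with \emph{no} noise contribution from $\overline I$ at all, and in particular no stray $\sigma^2 n_{\overline I}$ to absorb. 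Taking expectations and applying the cited oracle inequality to the second summand finishes the argument. Your concern about checking the hypotheses of \cite[Theorem~3]{gobel2018construction} on the restricted block is legitimate, but the paper does not address it either; it simply invokes the result.
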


The right-hand side in the inequality of Theorem \ref{thm-oracleBound} has a more explicit expression in terms of $\alpha, \sigma$ using Proposition \ref{prp-denois-unknownI}. Up to the error made by approximating the support with Algorithm \ref{algo:algo1}, the $\ell_2$-risk is essentially bounded by the $\ell_2$-risk of the Parseval frame procedure from~\cite{gobel2018construction} on the low-rank projection $f_I$ of $f$, that is
\begin{displaymath}
\mathbb E  \|f-\widehat f\|_2^2 \lesssim (2 \log(n_I) + 1) (\sigma^2 + OB(f_I)).
\end{displaymath}
To conclude, Theorem \ref{thm-oracleBound} provides a theoretical guarantee that the support approximation improves the denoising performances obtained from \cite{gobel2018construction}.

\section{Properties of LocLets}
\label{sec:chebyshev}

In this section, we highlight important properties for the application of Fourier localization in practice. First we discuss computational analysis, and methods to apply our techniques to large graphs. Then we study the relationships of LocLets with well-known graph wavelet constructions.

\subsection{Fast LocLet Transform and Computational Analysis}

In the case of large graphs,  GSP requires a special care for being efficient since functional calculus relies \emph{a priori} on the complete reduction of the Laplacian. Actually, several efficient methods were designed to retrieve only partial information from the eigendecomposition as matrix reduction techniques (see for instance~\cite{mahadevan2008fast, susnjara2015accelerated}) or polynomial approximations~\cite{hammond2011wavelets, shuman2011chebyshev,di2016efficient}. In this paper, the widely adopted latter approach with Chebyshev polynomials approximation is preferred and briefly recalled below (we refer the reader to \cite[Section III.C.]{shuman2011chebyshev} for a brief but more detailed description of Chebyshev approximation).

\subsubsection{Chebyshev approximations}

Roughly speaking, the idea is to approximate the function $g$ with its Chebyshev expansion $g_N$ at order $N$. More precisely, the Chebyshev polynomials of the first kind $(T_i)_{i \geq 0}$ are defined from the second order recursion
\begin{equation*}
T_0(x) = 1, \quad T_1(x) = x, \quad |x|\leq 1, \quad \textrm{and} \quad T_i(x) = xT_{i-1}(x) - T_{i-2}(x),
\end{equation*}
for $i \geq 2$. Then, the matrix $\L$ is \emph{normalized} as $\widetilde{\L} = \frac{2}{\lambda_1}\L - I_n$ so that $\mathrm{sp}(\widetilde{\L}) \subset [-1, 1]$. This gives rise to some function $\widetilde g : [-1, 1]\rightarrow \R$ with the property $g(\L) = \widetilde g(\widetilde \L)$. In fact, $\widetilde g(x) = g(\frac{\lambda_1}{2}(x+1))$ for all $x \in [-1, 1]$. Then $g(\L)$ has the following truncated Chebyshev expansion $g(\L) \approx g_N(\L)$:
\begin{equation*}
g_N(\L) = \sum_{0\leq i\leq N} a_i(\widetilde g) T_i(\widetilde \L),
\end{equation*}
where $N$ is the maximal degree of polynomials $T_i$ used in the expansion, and $a_i(\widetilde g)$ is the $i$-th coefficient in the $N$-th order Chebyshev expansion of function $\widetilde g$. Following \cite{hammond2011wavelets}, for any filter $g$ on $\spL$ and any signal $f$ on graph $\G$, the approximation $g_N(\L)$ provides a vector value close to $g(\L)f$ with time complexity $O(|\mathcal E|  N)$. 

The object presented in the sequel involves in particular the spectral projection $P_I(\L)f$ of a signal $f$ for any subset $I\subset I_\L$ which can be derived from the Chebyshev expansion of the indicator function $g = \indic_I$. This observation actually appears in several recent works~\cite{di2016efficient, fan2019spectrum}. More importantly for our study,  this efficient estimation is part of the Hutchinson stochastic trace estimator technique \cite{hutchinson1990stochastic}, providing us with an effective method to estimate $n_I = \rm{Tr}(\L_I)$. Finally, the present paper focuses on the computation of a sequence $g(\L_{I_k})_{1\leq k\leq K}$ (or its vector counterpart $g(\L_{I_k})f$) instead of a single $g(\L)$ (resp. $g(\L)f$). While a naive estimation would suggest that the computational complexity is then multiplied by a factor $K$ compared to the complexity of the computation of $g(\L)$, we argue in the following that there is in fact no significant computational overhead.

\subsubsection{Sharing Chebyshev polynomial matrices among filters}


Let us assume that it is needed to compute the estimated values of $g_k(\L)f$ for a given signal $f$ for several filters $g_k$, $k=0, \ldots, K$. Then the following two-step strategy can be adopted: (1) pre-compute Chebyshev expansions $\widetilde g_k(x) \approx \widetilde g_{k, N}(x) = \sum_{0\leq i\leq N} a_i(\widetilde{g_k}) T_i(x)$ for all $k=0, \ldots, K$; independently, compute Chebyshev approximation vectors $T_i(\widetilde \L)f$ for all $0\leq i\leq N$; (2) combine the previous results to compute the Chebyshev approximation $g_{k,N}(\L)f$ of $g_k(\L)f$:
\begin{equation*}
g_{k,N}(\L)f =  \sum_{0 \leq i \leq N} a_i(\widetilde g_k) T_i(\widetilde \L)f.
\end{equation*}

The complexity of the first step is dominated by the $N$ matrix-vector multiplications required to obtain $T_i(\widetilde \L)f$. So the first step has complexity $O(|\mathcal E|  N)$. The second step adds $N$ weighted matrices $a_i(\widetilde g_k) T_i(\widetilde \L)$ together, which is an operation of complexity $O(N  n^2)$ at most. As an important matter of fact, the overall complexity for this procedure is bounded by $O(|\mathcal E|  N  +  N  n^2)$, which is independent of the number of filters $g_k$, and the same as for the computation of $g(\L)$.

Sharing matrices among filters has several examples of applications in the current paper:
\begin{enumerate}
\item Computation of $g( \L_{I_k} )f$ for all $1\leq k \leq K$: the equation $g(\L_{I_k})f = g(\indic_{I_k}(\L) \L)f$ holds so that we can consider filters $g_k(x) = g(\indic_{I_k}(x) x)$. 
\item Computation of $g(s \L)f$ for several scale values $s$: consider filters of the form $g_s(x) = g(sx)$.
\item Computation of $n_{I_k}$ for all $1\leq k \leq K$: Hutchinson's stochastic estimation computes averages of $f_i^T P_{I_k}(\L)f_i$ for some random vectors $f_i$ ($i\leq n_H$) whose computational complexity is dominated by the approximation of vectors $P_{I_k}f_i$. Considering filters $g_k(x) = \indic_{I_k}(x)$, and sharing random vectors $(f_i)_i$ among all approximations of $n_k$, we end up with a complexity of $O(n_H  N  |\mathcal E|)$, independent of value $K$.
\end{enumerate}

In particular, Algorithm~\ref{algo:algo1} has complexity $O(n_H  N  |\mathcal E| + N   n^2)$. Indeed, its efficiency is calibrated on the computations of sequences $(\|\widetilde f_k\|_2)_{1\leq k\leq K}$ and $(n_k)_{1\leq k\leq K}$ whose computational analysis was discussed previously. It is worth observing that values $n_k$ do not depend on signal $f$ and should be estimated only once in the case where several signals $\widetilde f_1 , \widetilde f_2, \ldots$ are to be denoised.

%
%
%
%

\subsubsection{Optimizing storage of LocLets coefficients}

The storage of wavelet coefficients $(\mathcal W^{I_k}f)_{1\leq k\leq K}$ requires \emph{a priori} $K$ times the storage cost associated with the original transform $\mathcal W f$. When matrix reduction techniques are used to compute wavelets transform~\cite{susnjara2015accelerated}, one may reduce the storage consumption of the localized SGWT by suitably choosing the \textit{impulse} functions $(\delta_{m})_m$. For instance, assume that for each subset $I_k$ a Lanczos basis $(v_{m}^{k})_m$ of the subspace spanned by $\{ \chi_\ell, \ell \in I_k \}$ is given. Then the size of sequences $(v_{m}^{k})_m$ and $(v_{m}^{k})_{m, k}$ are respectively of order $O(|I_k \cap {\rm sp}(\L)|)= O(n_k)$ and $O(n)$. Thus, with impulse functions $(v_{m}^{k})_n$ in place of $\delta_m$ for transform $\mathcal W^{I_k}$, the storage requirements of localized transform $(\mathcal W^{I_k}f)_{1\leq k\leq K}$ and the original one $\mathcal W f$ are of the same order $O(J  n)$.

\subsection{Connections with Well-know Frames}

A family $\mathfrak F=\{ r_i \}_{i \in I}$ of vectors of $\mathbb R^\mathcal{V}$ is a frame if there exist $A,B > 0$ satisfying for all $f \in \mathbb R^\mathcal{V}$
\begin{equation*} 
A\|f \|^2_2 \leq \sum_{i \in I} |\langle f,r_i \rangle|^2 \leq B \|f\|^2_2.  
\end{equation*}
A frame is said to be \textit{tight} if $A=B$. This section gives two examples of frames introduced in the literature which can be realized as a LocLets representation and thus benefit from the advantages given by localization in the spectrum.

\subsubsection{Parseval frames}
\label{sec-ParsevalFrame}

Parseval frames are powerful representations to design wavelets with nice reconstruction properties \cite{leonardi2013tight,gobel2018construction}. In this section, we investigate the extent to which Parseval frames can be obtained from some LocLet representation. We show that for a particular choice of partition $I_\L = \sqcup_k I_k$, there exist frames which are Parseval frames and composed only of LocLets functions.

A finite collection $(\psi_j)_{j=0, \ldots,J}$ is a finite partition of unity on the compact $[0,\lambda_1]$ if
\begin{equation}
\label{equ-parsevalFrame}
\psi_j : [0,\lambda_1] \rightarrow [0,1] \quad \textrm{for all} \quad j \leq J \quad \textrm{and} \quad \forall \lambda \in [0,\lambda_1], \quad \sum_{j=0}^J \psi_j(\lambda)=1.
\end{equation}
Given a finite partition of unity $(\psi_j)_{j=0, \ldots, J}$, the Parseval identity implies that the following set of vectors is a tight frame:
\begin{equation*}
\mathfrak F = \left \{ \sqrt{\psi_j}(\L)\delta_i, \quad j=0, \ldots, J, \quad i \in V \right \}.
\end{equation*}
Some constructions of partition of unity involve functions $(\psi_j)_j$ that have almost pairwise disjoint supports \emph{i.e.} ${\rm supp}(\psi_{j})\cap {\rm supp}(\psi_{j^\prime}) = \emptyset$ as soon as $|j - j^\prime|>1$. For such partition of unity, set $I_0 = {\rm supp}(\psi_{0})$, $I_J = I_0 = {\rm supp}(\psi_{J})$ and $I_j = {\rm supp}(\psi_{j}) \cap {\rm supp}(\psi_{j+1})$ for all $1\leq j\leq J-1$. Then, the sequence $(I_j)_{0\leq j\leq J}$ defines a finite partition of $[0, \lambda_1]$, $[0, \lambda_1] = \sqcup_{0\leq j\leq J} I_j$, such that:
\begin{equation}
\label{equ-correspondence}
\psi_0 \indic_{I_0} = \indic_{I_0},\quad (\psi_j + \psi_{j+1}) \indic_{I_j} = \indic_{I_j}, \quad 0< j < J, \quad \textrm{and} \quad \psi_J \indic_{I_J} = \indic_{I_J}.
\end{equation}
An alternative tight frame can be constructed using a LocLet representation as shown in the following proposition.

\begin{prp}
\label{prp-tightFrames}
Assume Equations~\eqref{equ-parsevalFrame} and~\eqref{equ-correspondence} hold and set, for all $1 \leq k \leq J$, $\varphi_{n,k}=\sqrt{\psi_{0}}(\L_{I_k})\delta_n$, $\psi_{1,n,k}=\sqrt{\psi_{k}}(\L_{I_k})\delta_n$ and $\psi_{2,n,k}=\sqrt{\psi_{k}}(\L_{I_{k+1}})\delta_n$ for all $1\leq k\leq J$. Then $(\varphi_{n,k}, \psi_{j,n,k})_{1\leq j\leq 2,~1\leq m\leq n,~1\leq k\leq J}$ is a tight frame.
\end{prp}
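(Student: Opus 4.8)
The plan is to verify the tight frame inequality directly, reducing it to the known Parseval identity for the frame $\mathfrak F = \{\sqrt{\psi_j}(\L)\delta_i\}$ by carefully tracking how the localization at the subsets $I_k$ interacts with the supports of the $\psi_j$. First I would fix $f \in \R^{\mathcal V}$ and expand $f = \sum_{1 \le \ell \le n} \widehat f(\ell) \chi_\ell$ in the eigenbasis, so that for any filter $g$ and any subset $I$ one has $\langle f, g(\L_I)\delta_m \rangle = \sum_{\ell : \lambda_\ell \in I} g(\lambda_\ell) \widehat f(\ell) \widehat{\delta}_m(\ell)$. Summing the squares over $m$ and using the Parseval relation $\sum_m \widehat{\delta}_m(\ell)\overline{\widehat{\delta}_m(\ell')} = \langle \chi_\ell, \chi_{\ell'}\rangle = \delta_{\ell\ell'}$ (equivalently $\sum_m \delta_m \delta_m^T = I_n$), I get the clean identity
\begin{equation*}
\sum_{1 \le m \le n} |\langle f, g(\L_I)\delta_m\rangle|^2 = \sum_{\ell : \lambda_\ell \in I} g(\lambda_\ell)^2\, |\widehat f(\ell)|^2 .
\end{equation*}
This is the workhorse computation and reduces the whole statement to a scalar bookkeeping over the eigenvalues.

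Next I would apply this identity to each of the three families in the proposition. For $\varphi_{n,k} = \sqrt{\psi_0}(\L_{I_k})\delta_n$ we get a contribution $\sum_{\ell:\lambda_\ell \in I_k}\psi_0(\lambda_\ell)|\widehat f(\ell)|^2$; for $\psi_{1,n,k} = \sqrt{\psi_k}(\L_{I_k})\delta_n$ a contribution $\sum_{\ell:\lambda_\ell \in I_k}\psi_k(\lambda_\ell)|\widehat f(\ell)|^2$; and for $\psi_{2,n,k} = \sqrt{\psi_k}(\L_{I_{k+1}})\delta_n$ a contribution $\sum_{\ell:\lambda_\ell \in I_{k+1}}\psi_k(\lambda_\ell)|\widehat f(\ell)|^2$. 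Summing the total frame energy over all $k$ from $1$ to $J$, I would group terms by the eigenvalue index $\ell$: an eigenvalue $\lambda_\ell$ lying in a given block $I_k$ picks up the weight $\psi_0(\lambda_\ell) + \psi_k(\lambda_\ell)$ from the $\varphi$ and $\psi_1$ families indexed by that same $k$, plus $\psi_{k-1}(\lambda_\ell)$ from the $\psi_2$ family indexed by $k-1$ (whose second localization is $I_k$). So the total weight attached to $|\widehat f(\ell)|^2$ is $\psi_0(\lambda_\ell) + \psi_{k-1}(\lambda_\ell) + \psi_k(\lambda_\ell)$ when $\lambda_\ell \in I_k$.

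The crux is then to show this weight equals $1$ for every eigenvalue, and this is exactly where the hypotheses \eqref{equ-parsevalFrame} and \eqref{equ-correspondence} enter. By construction $I_k = {\rm supp}(\psi_k)\cap{\rm supp}(\psi_{k+1})$ for $0 < k < J$ (with the stated conventions at the endpoints), and the almost-disjointness of supports forces all $\psi_j$ with $j \notin \{k-1, k, k+1\}$ to vanish on $I_k$ — wait, more precisely one must check that on $I_k$ only $\psi_{k}$ and $\psi_{k+1}$ (for $1\le k\le J-1$) are nonzero among the ``mother'' indices, together with $\psi_0$. Then the partition-of-unity relation $\sum_{j=0}^J \psi_j(\lambda) = 1$ restricted to $\lambda = \lambda_\ell \in I_k$ collapses to $\psi_0(\lambda_\ell) + \psi_k(\lambda_\ell) + \psi_{k+1}(\lambda_\ell) = 1$, and this needs to be reconciled with the weight $\psi_0 + \psi_{k-1} + \psi_k$ I computed above by a careful reindexing (the index shift between ``which block'' and ``which mother wavelet'' is the place where sign/off-by-one errors lurk). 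The endpoint blocks $I_0$ and $I_J$ require separate attention: there $\psi_0\indic_{I_0} = \indic_{I_0}$ and $\psi_J\indic_{I_J} = \indic_{I_J}$ from \eqref{equ-correspondence} do the job directly. Once the weight is identified as $\equiv 1$, summing over $\ell$ gives $\sum |\widehat f(\ell)|^2 = \|f\|_2^2$, which is the tight frame identity with $A = B = 1$.

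I expect the main obstacle to be precisely this combinatorial matching of block indices with wavelet indices at the boundaries of the subdivision — making sure every eigenvalue is counted with total multiplicity exactly one and that the degenerate conventions $I_0 = {\rm supp}(\psi_0)$ and ``$I_J = I_0 = {\rm supp}(\psi_J)$'' in \eqref{equ-correspondence} are handled without double-counting or omission. The analytic content is light (it is all Parseval plus a partition of unity); the bookkeeping is where care is needed, and I would organize it by writing the total energy as a single sum $\sum_\ell W(\lambda_\ell)|\widehat f(\ell)|^2$ with $W$ an explicit piecewise sum of $\psi_j$'s, and then checking $W \equiv 1$ block by block using \eqref{equ-parsevalFrame} and \eqref{equ-correspondence}.
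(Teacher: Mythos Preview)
Your approach is essentially the paper's: both rest on the identity $\sum_{m} |\langle \sqrt{g}(\L_I)\delta_m, f\rangle|^2 = \langle g(\L_I)f, f\rangle$ (which is your eigenbasis computation rewritten at the operator level) and then invoke the support relations~\eqref{equ-correspondence} together with the partition of unity~\eqref{equ-parsevalFrame} to collapse the sum to $\|f\|_2^2$. The only cosmetic difference is that the paper regroups terms as $\psi_k(\L_{I_k}) + \psi_k(\L_{I_{k+1}}) = \psi_k(\L)$ directly rather than checking a scalar weight $W(\lambda_\ell)\equiv 1$ block by block; your flagged concern about the off-by-one matching between block indices and wavelet indices is exactly the bookkeeping the paper's short proof elides.
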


The resulting tight frame of Proposition~\ref{prp-tightFrames} is actually frame of LocLets if additionally the functions $\psi_j$ is of the form $\psi_j = \psi_1(s_j .)$ for some scale parameter $s_j$, $1 \leq j \leq J$. This is typically the case for the frames introduced in \cite{leonardi2013tight,gobel2018construction}. In these papers, the partition of unity is defined as follows: let $\omega : \mathbb R^+ \rightarrow [0,1]$ be some function with support in $[0,1]$, satisfying $\omega \equiv 1$ on $[0,b^{-1}]$ and set $\psi_0(\cdot)=\omega(\cdot)$ and for $j=1, \ldots, J$
\begin{equation*}
\psi_j(\cdot)=\omega(b^{-j} \cdot )-\omega(b^{-j+1} \cdot) \quad \textrm{with} \quad J= \left \lfloor \frac{\log \lambda_1}{\log b} \right \rfloor + 2.
\end{equation*}
In particular, the functions $\psi_k$ have supports in intervals $J_{k}=[b^{k-2}, b^{k}]$. Thus, one may define disjoint intervals $(I_{k})_{k}$ as follows: $I_{k} = [b^{k-1}, b^{k}]$. We have $J_{k} = I_{k} \cup I_{k+1} $, so that Equations \eqref{equ-correspondence} hold whereas the scaling property $\psi_j = \psi_1(b^{-1} .)$ is straightforward. By Proposition \ref{prp-tightFrames}, the set of vectors 
\begin{equation*}
\left \{ \sqrt{\psi_{0}}(\L_{I_k})\delta_n, \sqrt{\psi_{1}}(s_k\L_{I_k})\delta_n, \sqrt{\psi_{1}}(s_k\L_{I_{k+1}})\delta_n, \quad n,k \right \}
\end{equation*}
is a tight frame of LocLets. Observe that the transform $(\mathcal W^{I_k})_{I_k}$each component $\mathcal W^{I_k}$ of the LocLet transform $(\mathcal W^{I_k})_{I_k}$ only admit two scale parameters $s_k, s_{k-1}$.

\subsubsection{Spectrum-adapted tight frames}

Let us consider another family of tight frames tailored to the distribution of the Laplacian $\L$ eigenvalues proposed in \cite{shuman2015spectrum}. As shown below, these frames can be written in terms of a warped version of LocLets, and up to some approximation, in terms of (non-warped) LocLets. First, let us briefly recall the construction from \cite{shuman2015spectrum}.

The notion of \emph{warped} SGWT is introduced in~\cite{shuman2015spectrum} to adapt the kernel to the spectral distribution. Given a warping function $\omega: I_\L \rightarrow \R$, the warped SGWT is defined as:
\begin{equation*}
\mathcal W^{\omega} f = ( \varphi(\omega(\L) )f^{T} , \psi(s_{1}\omega(\L) )f^{T}, \ldots, \psi(s_J \omega(\L))f^T)^{T}.
\end{equation*}
As for our spectral localization, the objective of warping is to take benefits from the distribution of ${\rm sp}(\L)$ along interval $I_\L$. While the two techniques show similarities (\emph{e.g.} estimation of ${\rm sp}(\L)$ distribution), they are meant to answer different problems: warped SGWT is a technique to adapt the whole spectrum to some task (\emph{e.g.} producing a tight frame), whereas localized SGWT is designed to answer problems related to localized subsets in the spectrum (\emph{e.g.} denoising a frequency sparse signal). Here we show that the advantages of both LocLets and warped SGWT are obtained when the two methods are combined in a warped LocLet representation.

Let $\omega$ be some warping function on $I_\L$ chosen in the form $\omega(\cdot) = \log(C\omega_0(\cdot))$ where $\omega_0$ stands for the cumulative spectral distribution of $\L$ and $C$ is some normalization constant as shown in \cite{shuman2015spectrum}. Then, let $\gamma>0$ be an upper bound on $\rm{sp }(\L)$ and let $R,J$ be two integers such that $2 \leq R \leq J$. Setting $\omega_{\gamma, J, R} = \frac{\gamma}{J+1+R}$, Corollary 2 in~\cite{shuman2015spectrum} asserts that the family $(g_{m,j})_{m,j}$ of functions defined below is a tight frame 
\begin{equation}
\label{eq-spectrumAdaptedTightFrame}
g_{m,j}  =  \sum_{\ell}  \widehat g_j(\lambda_\ell) \widehat \delta_m(\ell) \chi_\ell,
\end{equation}
where functions $\widehat g_j$ arise from some kernel $\widehat g$ as
\begin{displaymath}
\widehat g_j(\lambda)= \widehat g(\omega(\lambda) - j\omega_{\gamma,J,R} )= \widehat g \left ( \log \frac{C\omega(\lambda)}{ e^{j\omega_{\gamma,J,R}} } \right ).
\end{displaymath}
Typically in \cite{shuman2015spectrum}, the kernel $\widehat g$ takes the form
\begin{displaymath}
\widehat g(\lambda)=\left [ \sum_{0\leq j\leq J} a_j \cos \left ( 2\pi j \cos \left ( \frac{\lambda}{R \omega_{\gamma,J,R}}  +  \frac{1}{2} \right ) \right )  \right ]\indic_{[ -R\omega_{\gamma,J,R}, 0]} (\lambda).
\end{displaymath}
for some sequence $(a_j)_j$ satisfying $\sum_j (-1)^j a_j = 0$.

The following proposition states that Equation~\eqref{eq-spectrumAdaptedTightFrame} admits an alternative form involving only (warped) LocLets functions. 

\begin{prp}
\label{prp-spectrumAdaptedFrames}
Setting $\psi(\lambda)=\widehat g(\log(C \lambda) )$ for $\lambda > 0$, consider the family of warped LocLets defined  for all $0 \leq k \leq R-1$, $1\leq m\leq n$ and $1\leq j\leq J$ by
\begin{equation*}
\psi_{j,m,I_k}  =  \sum_{\ell \in I_k}  \psi(s_j \omega_0(\lambda_\ell)) \widehat \delta_m(\ell) \chi_\ell \quad \textrm{with} \quad I_k = \left [ \frac{e^{(k-R)\omega_{\gamma,J,R}}}{C},\frac{e^{(k-R+1)\omega_{\gamma,J,R}}}{C} \right ].   
\end{equation*}
Then, the following identity holds for all $j=1, \ldots, J$ and all $m=1, \ldots, n$
\begin{equation*}
g_{m,j} = \sum_{1\leq k\leq R-1} \psi_{j,m,I_k}.
\end{equation*}
\end{prp}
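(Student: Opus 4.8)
The plan is to unwind the definitions on both sides and check that the decomposition of the index $\ell$ (equivalently of the eigenvalue $\lambda_\ell$) into the intervals $I_k$ exactly matches the single sum defining $g_{m,j}$. First I would observe that, by the functional-calculus identity $g(\L)=\sum_{1\leq k\leq K}g(\L_{I_k})$ recalled in Section~\ref{sec-locfour} (valid for any partition of $I_\L$ and any filter), it suffices to show that the intervals $I_k = [\,e^{(k-R)\omega_{\gamma,J,R}}/C,\ e^{(k-R+1)\omega_{\gamma,J,R}}/C\,]$, for $0\leq k\leq R-1$, together cover the portion of the spectrum on which $\widehat g_j(\lambda_\ell)\neq 0$ for each fixed $j$, and are pairwise disjoint there. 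In other words, the key claim is that $\widehat g_j(\lambda)=\psi(s_j\omega_0(\lambda))$ whenever $\lambda\in I_k$ for some $1\leq k\leq R-1$, and $\widehat g_j(\lambda)=0$ otherwise.

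The second step is the computation that turns $\widehat g_j$ into $\psi$. Starting from $\widehat g_j(\lambda)=\widehat g(\omega(\lambda)-j\omega_{\gamma,J,R})$ with $\omega(\cdot)=\log(C\omega_0(\cdot))$, I would write
\begin{displaymath}
\widehat g_j(\lambda)=\widehat g\bigl(\log(C\omega_0(\lambda))-j\omega_{\gamma,J,R}\bigr)
=\widehat g\bigl(\log(C\,e^{-j\omega_{\gamma,J,R}}\omega_0(\lambda))\bigr)=\psi\bigl(e^{-j\omega_{\gamma,J,R}}\omega_0(\lambda)\bigr),
\end{displaymath}
using $\psi(\lambda)=\widehat g(\log(C\lambda))$ from the statement; this identifies the scale as $s_j=e^{-j\omega_{\gamma,J,R}}$ (a dilation in the warped variable), matching the factor $\psi(s_j\omega_0(\lambda_\ell))$ in the definition of $\psi_{j,m,I_k}$. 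Next I would locate the support: since $\widehat g$ is supported in $[-R\omega_{\gamma,J,R},0]$, the argument $\log(C\omega_0(\lambda))-j\omega_{\gamma,J,R}$ lies in this interval exactly when $e^{(j-R)\omega_{\gamma,J,R}}/C\leq \omega_0(\lambda)\leq e^{j\omega_{\gamma,J,R}}/C$. Because $\omega_0$ is the (nondecreasing) cumulative spectral distribution, and after the change of index aligning $\omega_0(\lambda)$-ranges with the intervals $I_k$ (which are precisely defined through the same exponential endpoints, shifted by $R$), this support condition translates into $\lambda\in \bigsqcup_{k} I_k$ over the appropriate finite range of $k$. Summing $\widehat g_j(\lambda_\ell)\widehat\delta_m(\ell)\chi_\ell$ over all $\ell$ therefore splits as $\sum_{1\leq k\leq R-1}\sum_{\ell\in I_k}\psi(s_j\omega_0(\lambda_\ell))\widehat\delta_m(\ell)\chi_\ell=\sum_{1\leq k\leq R-1}\psi_{j,m,I_k}$, which is the claimed identity.

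The main obstacle I anticipate is purely bookkeeping rather than conceptual: getting the index shift between $j$, $k$ and $R$ exactly right, and being careful about the endpoints of the $I_k$ (half-open versus closed), so that the union of the $I_k$ covers the support of $\widehat g_j$ without overlap and without omitting eigenvalues that sit exactly at an interval boundary. One also has to confirm that the warping here is by $\omega_0$ (the raw cumulative distribution) inside $\psi_{j,m,I_k}$, whereas the original frame uses $\omega=\log(C\omega_0(\cdot))$ — the logarithm and the constant $C$ being absorbed into the definition of $\psi$, as the computation above shows. Once the intervals are verified to form the correct partition of the relevant part of $[0,\lambda_1]$, the identity follows term by term from functional calculus, with no further estimates required.
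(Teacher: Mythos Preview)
Your proposal is correct and follows essentially the same route as the paper's proof: both first rewrite $\widehat g_j(\lambda)=\psi(s_j\omega_0(\lambda))$ with $s_j=e^{-j\omega_{\gamma,J,R}}$, then use ${\rm supp}(\widehat g)=[-R\omega_{\gamma,J,R},0]$ to identify the support condition $\omega_0(\lambda)\in J_j=[C^{-1}e^{(j-R)\omega_{\gamma,J,R}},C^{-1}e^{j\omega_{\gamma,J,R}}]$, and finally observe that $J_0=\sqcup_k I_k$ so that the sum over $\ell$ splits into the sum over $k$ of the LocLet pieces. Your cautionary remarks about the index shift and endpoint conventions are well placed, since the paper's own notation (writing ``$\ell\in I_k$'' when the constraint is really on $s_j\omega_0(\lambda_\ell)$) is loose in exactly this spot.
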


%
%

%


\section{Experiments on suites of large matrices}
\label{experiments}

This section details experiments made on large graphs to validate the Fourier localization techniques introduced in that paper. After describing the experimental settings, we describe the outcomes of several experiments showing strong advantages in the use of Fourier localization in practice.

\subsection{Choice of spectral partition \texorpdfstring{$I_\L = \sqcup_k I_k$}{IL}}
\label{sec-Kselection}

In order to keep the problem combinatorially tractable, it is necessary to reduce the choice of possible partitions of $I_\L$ into subintervals $I_k$. That is why, the partitions considered in the sequel are regular in the sense that all intervals have the same length $\lambda_1/K$ for some integer $K \geq 1$. Thereafter, the parameter $K$ is chosen so that the eigenvalues are distributed as evenly as possible in each interval $I_k$. Without prior information, it is indeed natural not to favor one part of the spectrum over another. Most importantly, in the view of the concentration property of the median around the noise level $\sigma^2$ of Proposition~\ref{concentration-noiseLevel}, it is essential to keep the parameter $\beta$ as close to one as possible.   

In order to implement the ideas above, it is necessary to estimate the spectral measure of $\L$ which can be described by the so-called spectral density function:
\begin{equation*}
\varphi_\L(\lambda)=\frac{1}{n} \sum_{\ell=1}^n \delta(\lambda-\lambda_\ell) \quad \textrm{for all} \quad \lambda\in I_\L. 
\end{equation*}
There are several techniques for such an approximation among which the \emph{Kernel Polynomial Method} (see, \emph{e.g.} \cite{silver1994densities, wang1994calculating}). The latter approximates the spectral density $\varphi_\L$ with the help of matrix Chebyshev expansion $(\varphi_\L^N)_N$ (see~\cite{lin2016approximating} for a detailed presentation).

Now, let $(I_k)_{1 \leq k \leq K}$ be some regular partition of $I_\L$ and $(n_k)_{1 \leq k \leq K}$ be the corresponding numbers of eigenvalues in each $I_k$. Choosing the parameter $K \geq 1$ so that the entropy defined by
\begin{equation*}
E(K) = - \sum_{1\leq k\leq K} \frac{n_k}{n} \log\left(\frac{n_k}{n}\right)
\end{equation*}
is maximal ensures that the eigenvalues are as equally distributed in each interval as possible. In application, the Kernel Polynomial Method provides an approximation $n_k^N$ of $n_k$ and the corresponding empirical entropy $E_N(K)$ is used as a proxy for the theoretical one.

Empirically, the entropy increases logarithmically and then stabilizes from a certain elbow value $K_{\rm elbow}$ as illustrated in Figure~\ref{fig-entropy}. This elbow value is displayed in dashed lines in Figure~\ref{fig-entropy}. In the experiments, we choose this value $K_{\rm elbow}$ motivated by two reasons. First, as the intervals become shorter it is more difficult to obtain a uniform distribution of the eigenvalues into those intervals. The second reason is related  to the quality of the estimate $n_k^N$ of $n_k$ as the sample size decreases. To illustrate this fact, we consider the Mean Relative Error (MRE) defined by
\begin{equation*}
\mathrm{MRE}_N(K)=\frac{\sum_{1\leq k\leq K} | n_k - n_{k}^N |}{n}.
\end{equation*}
As highlighted by Figure~\ref{fig-entropy}, the empirical entropy actually stabilizes when the Chebyshev approximation, in terms of MRE, is no longer sharp enough. 

\begin{figure}[t]
  \centering
\subfigure[\textit{Si2} graph]{
\includegraphics[width=0.4\textwidth]{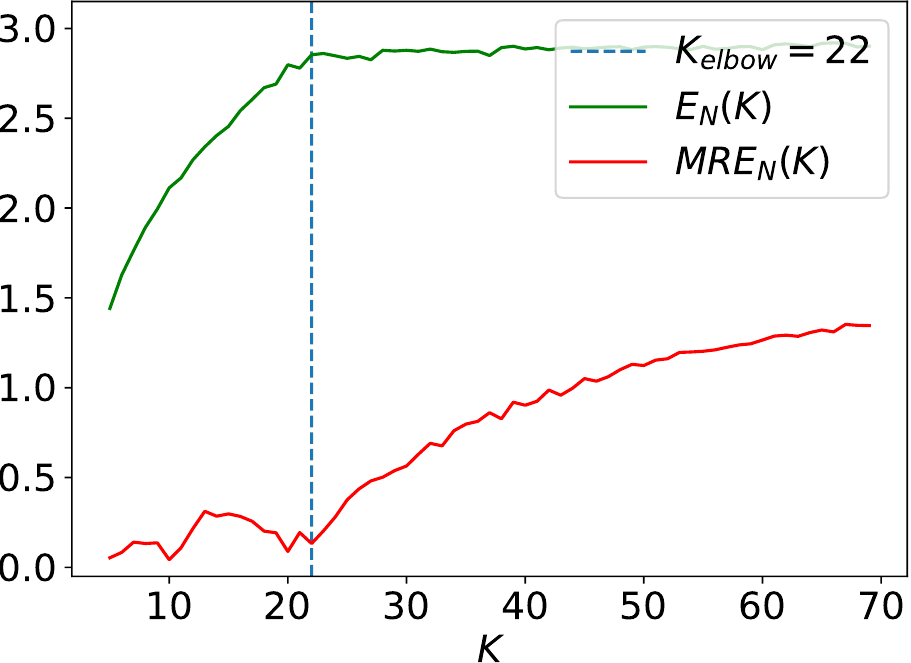}
}
~~~~
\subfigure[\textit{minnesota} graph]{
\includegraphics[width=0.4\textwidth]{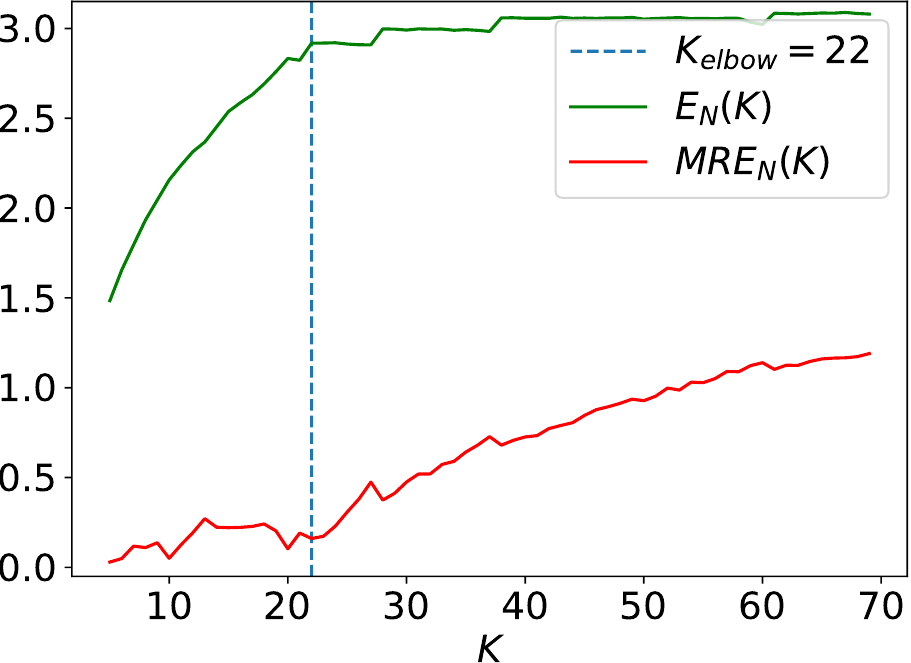}
}
\caption{Variations of $E_N(K)$ and $\mathrm{MRE}_N(K)$ with parameter $K$.}
\label{fig-entropy}
\end{figure}

\subsection{The experimental settings}


Following~\cite{fan2019spectrum}, we propose to validate our techniques on an extended suite of large matrices extracted from the Sparse Matrix Collection in~\cite{davis2011university}. Most of these matrices have an interpretation as the Laplacian matrix of a large graph. We define matrix $\L$ from the following matrices of the suite:\textit{si2} ($n=769$), \textit{minnesota} ($n=2642$), \textit{cage9} ($n=3534$), \textit{saylr4} ($n=3564$) and \textit{net25} ($n=9520$). We extend this graph collection with the well-studied \textit{swissroll} graph Laplacian matrix ($n=1000$).


We sample randomly signals whose supports are sparse in the Fourier domain. We will use the notation $f_{i-j}$ for normalized signals supported on a sub-interval of $I_\L$ containing exactly the eigenvalues $\lambda_{i}, \lambda_{i+1},\ldots,\lambda_{j}$. As an example, $f_{n-n}$ is a constant signal while $f_{1-2}$ is a highly non-smooth signal supported on the eigenspaces of large eigenvalues $\lambda_1, \lambda_2$. For experiments, the signals were calculated from the knowledge of $\mathrm{sp}(\L)$, and relevant projections of random functions on the graph.



We have compared the performances of Algorithms~\ref{algo:algo2} and~\ref{algo:algo3} against the thresholding procedure described in~\cite{gobel2018construction}. As the denoising method in \cite{gobel2018construction} requires the computation of the whole spectral decomposition of the Laplacian, it does not scale to large graphs. We stress here that we provide a fair comparison with \cite{gobel2018construction}, only in terms of denoising performance, and with no computational considerations. Moreover, we choose for LocLets to use the most naive thresholding procedure by considering a global and scale independent threshold level. 

For all the experiments below, the SGWT and LocLets are built upon the scale and kernel functions giving rise to the Parseval frame of ~\cite{gobel2018construction}, whose construction is recalled in Section~\ref{sec-ParsevalFrame}. More precisely, set respectively $\varphi=\sqrt{\zeta_0}$ and $\psi=\sqrt{\zeta_1}$ for the scale and kernel functions with $\zeta_0(x)=\omega(x)$, and $\zeta_1(x)=\omega(b^{-1}x)-\omega(x)$,
where we choose $b=2$ and $\omega$ is piecewise linear, vanishes on $[1,\infty)$ and is constant equal to one on $(-\infty,b]$. The scales are of the form $s_j=b^{-j+1}$ for $j=1, \ldots, J$ where $J$ is chosen similarly to \cite{gobel2018construction}. 

In what follows, `PF' stands for Parseval Frame and refers to the estimator of \cite{gobel2018construction}; the estimators implemented by Algorithm \ref{algo:algo2} and Algorithm \ref{algo:algo3} are referred to as `LLet' and `LLet+PF' respectively. The notation `$\mathrm{SNR}_{\mathrm{in}}$' refers to the trivial model releasing the noisy signal $\widetilde f$, corresponding to the classical input noise level measurement, and serves as a worst-case baseline for other models. Below, the latter methodology is shown to outperform all the others for very frequency-sparse signals. It is also worth recalling that `LLet+PF' benefits from the dimension reduction property of LocLets. More precisely, whereas the whole eigendecomposition of $\L$ is required to apply `PF', for Parseval frame denoising in the context of `LLet+PF', only a low-rank spectral decomposition is needed, namely the decomposition of $\L_I$ for $I$ the estimate of ${\rm supp}\widehat f$.  


For all our experiments, we set $\alpha=0.001$ for Algorithm \ref{algo:algo1}. For the denoising experiment, we compute the best SNR result $r_{D}$ over a large grid of values $(t_1, t_2)$, and  for each denoising method $D$ with $D \in\{`\mathrm{SNR}_{\mathrm{in}}\textrm{',`PF',`LLet',`LLet+PF'} \}$. Then, we calculate two metrics: the maximum $\mathrm{M}_{D}$ and average value $\mu_{D}$ of the values $r_{D}$ over $10$ random perturbations of the signal $f$. We recall that a good quality in denoising is reflected by a large value of the SNR metric.

\subsection{Analysis of our experiments}
\subsubsection{Noise level estimation}


We have evaluated the performances of estimators $\widehat \sigma_{\rm mean}^{r}$ and $\widehat \sigma_{\rm med}$ in the estimation of the unknown noise level $\sigma$ from $10$ realizations of the noisy signal $\widetilde  f = f + \xi$ for a given noise level $\sigma$. Figure~\ref{fig-sigmas} (resp. Figure~\ref{fig-sigma001Net25} ) shows the best performances of each estimator on the \textit{minnesota} (resp. \textit{net25}) graph for the non-regular but frequency sparse signal $f_{1392-1343}$ (resp. $f=f_{4971-5020}$), when parameter $K$ ranges in $\{5, 10, 20, 30, 40, 50 \}$ and for level of noise $\sigma=0.01$ (resp. $\sigma=0.001$).

\begin{figure}
  \centering
\includegraphics[width=0.4\textwidth]{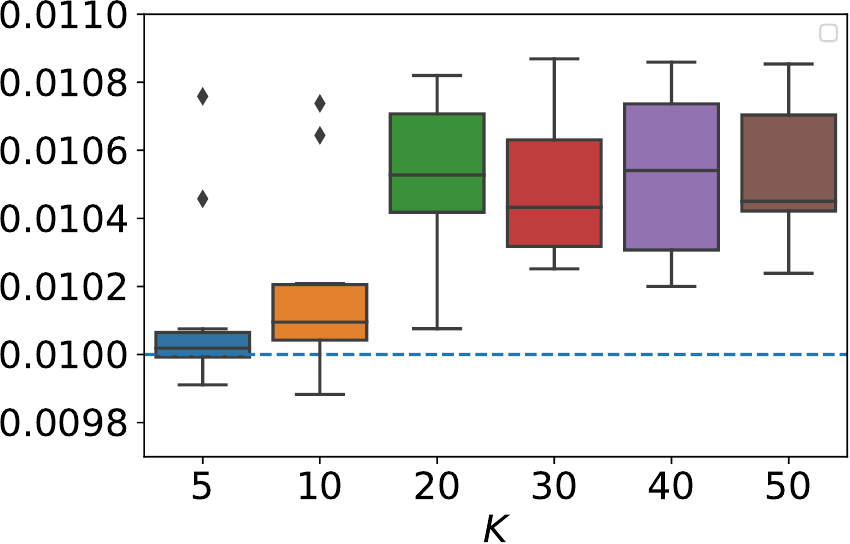}
\includegraphics[width=0.4\textwidth]{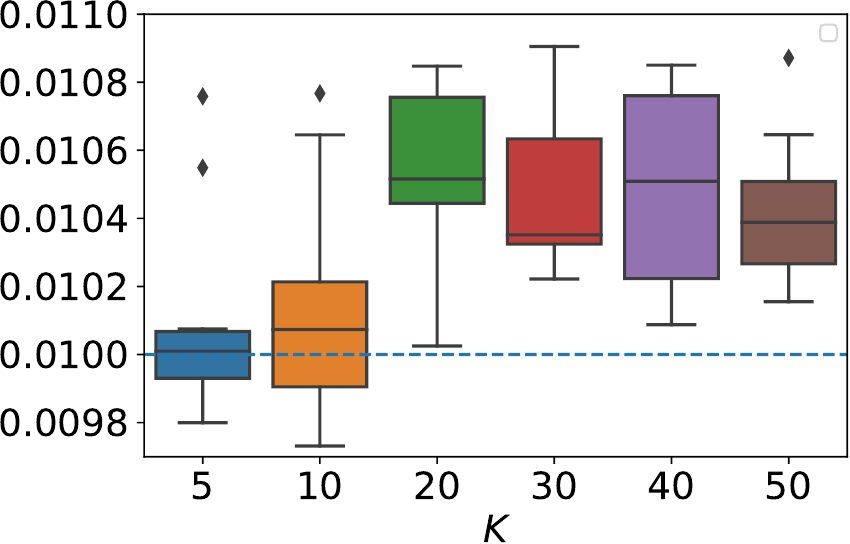}
\caption{Performances of estimators $\widehat \sigma_{\rm mean}$ (left) and $\widehat \sigma_{\rm med}$ (right) for \textit{minnesota} graph, signal $f_{1392-1343}$ and $\sigma=0.01$.} 
\label{fig-sigmas}
\end{figure}

Figure~\ref{fig-sigmas} illustrates that both estimators $\widehat \sigma_{\rm mean}^{r}$ and $\widehat \sigma_{\rm med}$ can provide good estimates of $\sigma$. Best performances are obtained for values of parameter $K$ below the elbow value $K_{elbow}(minnesota)=22$ introduced in Section~\ref{sec-Kselection}. We observe that performances drop considerably if almost no localization is used (for instance, for parameter values $K=1$ or $K=2$, $\widehat \sigma \sim 0.021$ in the experiment of Figure~\ref{fig-sigmas}, far from the performances for $K\geq 5$ for estimating $\sigma=0.01$).

\begin{figure}
\centering
  \includegraphics[width=0.4\textwidth]{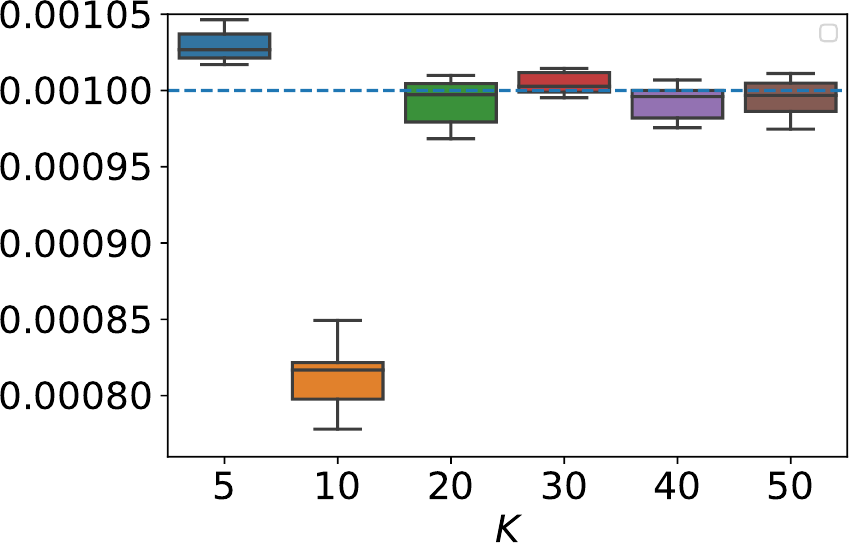}
\includegraphics[width=0.4\textwidth]{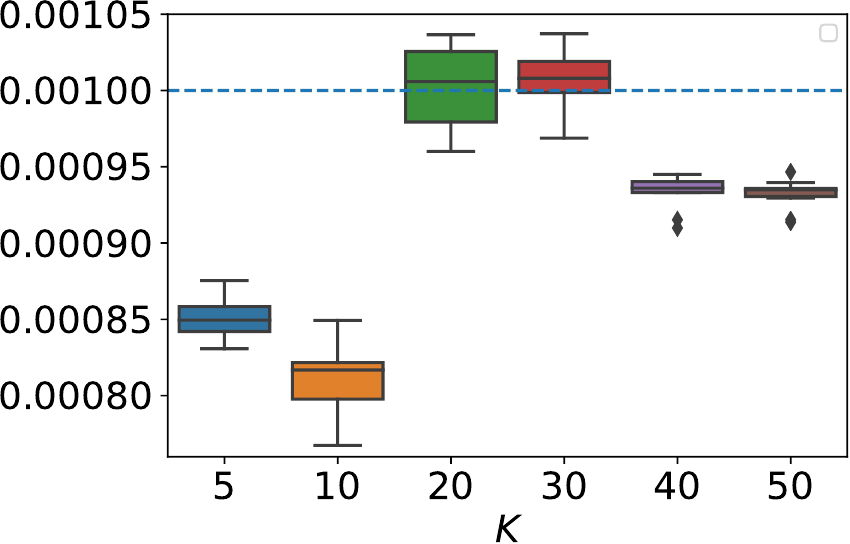}
\caption{Performances of estimators $\widehat \sigma_{\rm mean}^r$ (left) and $\widehat \sigma_{\rm med}^r$ (right) for \textit{net25} graph, signal $f_{4971-5020}$ and $\sigma = 0.001$.} 
\label{fig-sigma001Net25}
\end{figure}

Figure~\ref{fig-sigma001Net25} shows that localization is necessary, namely $K\geq 10$ or even $K\geq 20$, in order to reach the best performances for the large \textit{net25} graph. Contrary to experiments for the \textit{minnesota} graph, estimators  $\widehat \sigma_{\rm mean}^{r}$ and $\widehat \sigma_{\rm med}$ underestimate the value of $\sigma$. Also, best values of $K$ range between $10$ and $30$ for \textit{net25} graph, compared to best values $K=5$ and $K=10$ for \textit{minnesota} graph (see Figure~\ref{fig-sigmas}). This illustrates the idea that noise level estimation strongly depends on the underlying graph structure. As a consequence, a parameter $K$ selection has to take graph and signal information into account to be relevant. Interestingly, the elbow values $K_{elbow}(minnesota)=22$ and $K_{elbow}(net25)=22$ provide performances which are not optimal, but close to the best possible ones.

In Figure~\ref{fig-sigmasSi2r350}, performances for various values of parameter $r$ are displayed for a fixed parameter $K=K_{elbow}(minnesota)$. While it is true that $\widehat \sigma_{\rm mean}^{r}$ can perform better than $\widehat \sigma_{\rm med}^{r}$, it happens only for very specific values of $r$, which a priori depend on the signal regularity. Without any further parameter selection, these observations suggest using the most robust estimator $\widehat \sigma_{\rm med}$ in practice.


\begin{figure}[t]
\centering
\includegraphics[width=0.4\textwidth]{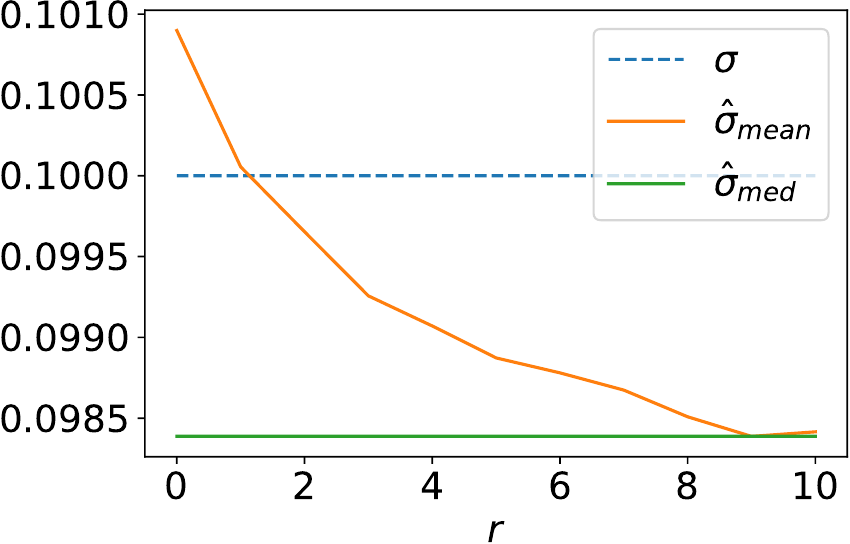}
\includegraphics[width=0.4\textwidth]{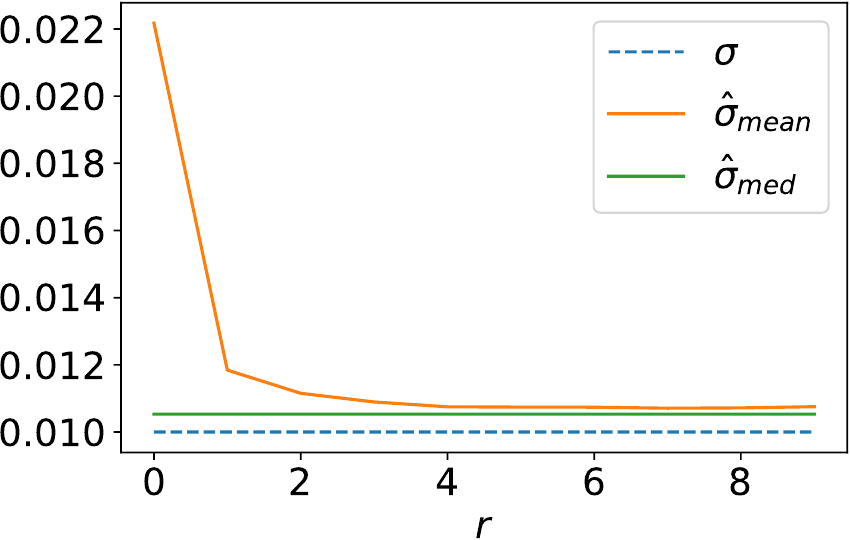}
\caption{Dependence on parameter $r$ for \textit{minnesota} graph, signal $f_{1392-1343}$, $K=22$ and $\sigma=0.1$ (left), $\sigma=0.01$ (right).} 
\label{fig-sigmasSi2r350}
\end{figure}


\subsubsection{Sparse signal denoising}

As a first denoising experiment, we have compared the performances of `LLet', `PF' and `LLet+PF' for a fixed value $K=K_{elbow}$ given by the rule of thumb described in Section~\ref{sec-Kselection}. For each matrix in the Extended Matrices Suite, we have experimented the denoising task on two frequency-sparse signals, one regular and the other non-regular. Several values of noise level $\sigma$ were used, corresponding to values of $\mathrm{SNR}_{\mathrm{in}}$ ranging in $[4, 18]$. Results from experiments are displayed in Tables~\ref{tab-swissroll} and~\ref{tab-all}. The first obvious observation is that `LLet+PF' performs better than its competitors in almost all situations. The gain is sometimes considerable since we observed a gap of $5$dB in $\mu_{D}$-metric between `LLet+PF' and its closest concurrent `PF' in some cases, and up to $7$dB in $\mathrm{M}_{D}$-metric. These experiments confirm the theoretical guarantees obtained in Theorem~\ref{thm-oracleBound}. The benefits of localization are reduced for graph \textit{net25}: Table~\ref{tab-all} shows that the more conservative choice $K=5$ is better than $K=25$. It appears that for \textit{net25}, the spectrum $\spL$ is localized at a small number of distinct eigenvalues, hence diminishing the advantages of localizing with our methods. 

\begin{table}[!htp] \label{tab-swissroll}
  \caption{SNR performance for Swissroll ($n=1000$, $K=22$).}
  \begin{center}
  \begin{tabular}{lrrrrrrrrr}
    \toprule
    signal  & $\sigma$ & $\mathrm{SNR}_{\mathrm{in}}$ & $\mathrm{M}_{\mathrm{PF}}$  &  $\mathrm{M}_{\mathrm{LLet}}$  &  $\mathrm{M}_{\mathrm{LLet+PF}}$ & $\mu_{\mathrm{PF}}$  &  $\mu_{\mathrm{LLet}}$  &  $\mu_{\mathrm{LLet+PF}}$ \\ 
    \midrule
    $f_{951-1000}$   & 0.005 & 16.195 &\ag 17.557 &\lg 20.580 &\gg 20.528 &\ag 17.361 &\lg 20.035 &\gg 19.974 \\
$f_{501-550}$ & 0.005 & 15.859 &\gg 18.298 &\ag 8.244  &\lg 20.821 &\gg 18.044 &\ag 8.140  &\lg 20.245 \\
$f_{951-1000}$   & 0.01  & 10.267  &\ag 12.183 &\gg 15.564 &\lg 15.701 &\ag 10.433 &\gg 13.652 &\lg 13.760 \\
$f_{501-550}$ & 0.01  & 10.178 &\gg 13.121 &\ag 7.879  &\lg 16.204 &\gg 11.165 &\ag 7.646  &\lg 14.518 \\
$f_{951-1000}$   & 0.015 & 6.763  &\ag 9.430  &\gg 12.661 &\lg 13.129 &\ag 8.961  &\gg 12.127 &\lg 12.388 \\
$f_{501-550}$ & 0.015 & 6.362  &\gg 9.898  &\ag 7.611  &\lg 14.159 &\gg 9.540  &\ag 7.481  &\lg 13.398 \\
    \bottomrule
  \end{tabular}
\end{center}
\end{table}

Another interesting observation is that `LLet' may outperform `PF' in some specific signal and noise level configurations, as shown in Table~\ref{tab-swissroll}. This is a very favorable result for localized Fourier analysis, since `LLet' appears to be a technique which is more accurate and more efficient as well compared to `PF' in some situations. However in many cases, `LLet' performances drop down compared to the more stable thresholding techniques `PF' and `LLet+PF', which use thresholds adapted to the wavelet basis.

\begin{table}[htp!]
\caption{SNR performances for denoising task.}
\label{tab-all}
\begin{center}
\begin{tabular}{lrrrrrrrrr}
\toprule
matrix  &  signal  & $\sigma$ & $\mathrm{SNR}_{\mathrm{in}}$ & $\mathrm{M}_{\mathrm{PF}}$  &   $\mathrm{M}_{\mathrm{LLet+PF}}$ & $\mu_{\mathrm{PF}}$  &  $\mu_{\mathrm{LLet+PF}}$ \\ 
\midrule
Si2 &$f_{720-769}$   & 0.005 & 17.104 &\ag 22.344 &\gg 26.973 &\ag 21.849 &\gg 25.170 \\
 ($n=762$, $K=22$)   &$f_{370-419}$  & 0.005 & 16.820 &\ag 18.034 &\gg 21.778 &\ag 17.813 &\gg 20.673 \\
&$f_{720-769}$   & 0.01  & 11.408 &\ag 16.821 &\gg 22.501 &\ag 16.572 &\gg 20.558 \\
&$f_{370-419}$  & 0.01  & 11.175 &\ag 12.444 &\gg 19.740 &\ag 12.151 &\gg 17.121 \\
&$f_{720-769}$   & 0.02  & 4.826  &\ag 11.476 &\gg 15.695 &\ag 11.104 &\gg 14.711 \\
&$f_{370-419}$  & 0.02  & 5.047  &\ag 7.354  &\gg 13.542 &\ag 6.925  &\gg 12.677 \\
\bottomrule
Minnesota &$f_{2593-2642}$    & 0.004 & 13.599 &\ag 17.839 &\gg 20.717 &\ag 17.672 &\gg 20.035 \\
($n=2642$, $K=22$)&$f_{1343-1392}$ & 0.004 & 13.741 &\ag 15.999 &\gg 20.388 &\ag 15.822 &\gg 19.234 \\
&$f_{2593-2642}$    & 0.005 & 11.681 &\ag 16.086 &\gg 19.342 &\ag 15.830 &\gg 18.417 \\
&$f_{1343-1392}$ & 0.005 & 11.916 &\ag 14.459 &\gg 18.392 &\ag 14.298 & \gg 18.029 \\
&$f_{2593-2642}$    & 0.01  & 5.911  &\ag 10.875 &\gg 14.143 &\ag 10.605 &\gg 13.556 \\
&$f_{1343-1392}$ & 0.01  & 5.843  &\ag 9.660  &\gg 11.762 &\ag 9.409  &\gg 10.952 \\
\bottomrule
Cage9 &$f_{3485-3534}$     & 0.003 & 15.016 &\gg 20.477 &\ag 9.700  &\gg 20.216 &\ag 9.664  \\
($n=3534$, $K=22$) &$f_{1785-1834}$ & 0.003 & 15.014 &\ag 15.876 &\gg 16.945 &\ag 15.798 &\gg 16.799 \\
&$f_{3485-3534}$     & 0.005 & 10.503 &\ag 17.290 &\gg 18.410 &\ag 16.772 &\gg 18.185 \\
&$f_{1785-1834}$ & 0.005 & 10.507 &\ag 12.118 &\gg 13.032 &\ag 12.035 &\gg 12.898 \\
&$f_{3485-3534}$     & 0.009 & 5.423  &\ag 13.002 &\gg 13.264 &\ag 12.763 &\gg 12.898 \\
&$f_{1785-1834}$ & 0.009 & 5.395  &\ag 8.329  &\gg 10.468 &\ag 8.168  &\gg 9.827  \\
\bottomrule
Saylr4 &$f_{3515-3564}$    & 0.003 & 14.871 &\ag 23.108 &\gg 24.516 &\ag 23.040 &\gg 24.117 \\
($n=3564$, $K=22$) &$f_{2015-2064}$ & 0.003 & 15.069 &\ag 21.365 &\gg 23.903 &\ag 21.010 &\gg 23.412 \\
&$f_{3515-3564}$    & 0.005 & 10.478 &\ag 19.135 &\gg 20.966 &\ag 18.943 &\gg 20.268 \\
&$f_{2015-2064}$ & 0.005 & 10.635 &\ag 17.016 &\gg 19.610 &\ag 16.662 &\gg 18.732 \\
&$f_{3515-3564}$    & 0.009 & 5.420  &\ag 15.277 &\gg 17.070 &\ag 14.791 &\gg 16.567 \\
&$f_{2015-2064}$ & 0.009 & 5.480  &\ag 12.055 &\gg 14.802 &\ag 11.830 &\gg 14.031 \\
\bottomrule
Net25 &$f_{9471-9520}$    & 0.006 & 4.682 &\ag 5.171 &\gg 5.319 &\ag 5.094 &\gg 5.205 \\
($n=9520$, $K=5$) &$f_{4971-5020}$ & 0.006 & 4.577 &\ag 5.811 &\gg 6.035 &\ag 5.714 &\gg 5.933 \\
&$f_{9471-9520}$    & 0.007 & 3.282 &\ag 5.287 &\gg 5.416 &\ag 5.127 &\gg 5.251 \\
&$f_{4971-5020}$ & 0.007 & 3.406 &\ag 5.267 &\gg 5.451 &\ag 5.104 &\gg 5.266 \\
&$f_{9471-9520}$    & 0.008 & 2.208 &\ag 5.171 &\gg 5.268 &\ag 4.928 &\gg 5.034 \\
&$f_{4971-5020}$ & 0.008 & 2.153 &\ag 4.450 &\gg 4.594 &\ag 4.319 &\gg 4.471 \\
\bottomrule
Net25 &$f_{9471-9520}$      & 0.006 & 4.495 &\ag 5.319 &\gg 6.004 &\ag 5.190 &\gg 5.834 \\
($n=9520$, $K=25$) &$f_{4971-5020}$ & 0.006 & 4.574 &\gg 5.849 &\ag 4.752 &\gg 5.714 &\ag 4.579 \\
&$f_{9471-9520}$    & 0.007 & 3.307 &\ag 5.264 &\gg 5.765 &\gg 4.951 &\ag 5.447 \\
&$f_{4971-5020}$ & 0.007 & 3.256 &\gg 5.132 &\ag 4.182 &\gg 5.014 &\ag 4.102 \\
&$f_{9471-9520}$    & 0.008 & 2.111 &\ag 4.913 &\gg 5.249 &\ag 4.756 &\gg 5.128 \\
&$f_{4971-5020}$ & 0.008 & 2.150 &\gg 4.345 &\ag 3.670 &\gg 4.251 &\ag 3.554 \\
\bottomrule
\end{tabular}
\end{center}
\end{table}


We also provide experimental results to understand the extent to which our results depend on the partition size parameter $K$. A few remarks are suggested by Figure~\ref{fig-KselectionSi2}: 
\begin{itemize}
\item The best performances are not obtained for the elbow value $K_{elbow}$, suggesting searching for a more task-adapted size of partition $K$.
\item Good performances persist for values of $K$ much larger than $K_{elbow}$, and in particular for regular signals.
\item For large values of $K$, there is a severe drop in performances. As explained before, the error generated by Chebyshev's approximation grows with the number of intervals in the partition, which makes the approximation of the support more difficult.
\end{itemize} 

\begin{figure}[t]
\begin{center}
\includegraphics[width=0.4\textwidth]{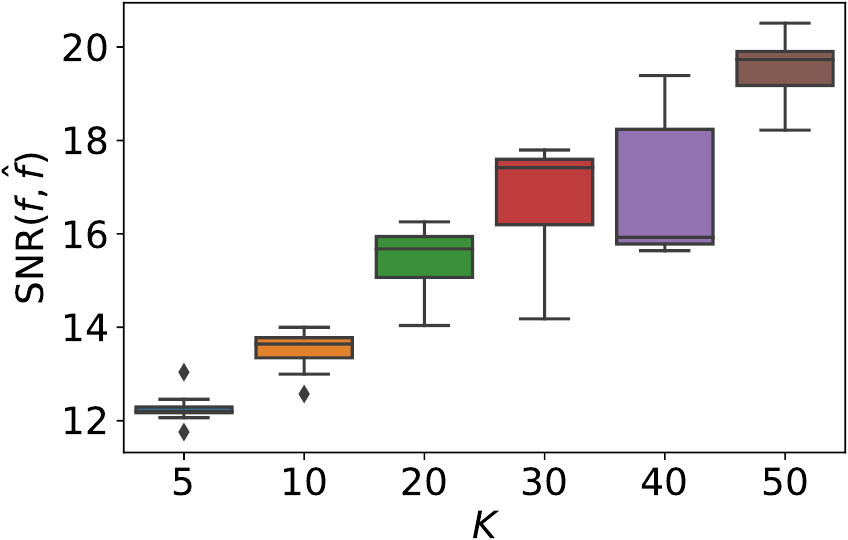}
\includegraphics[width=0.4\textwidth]{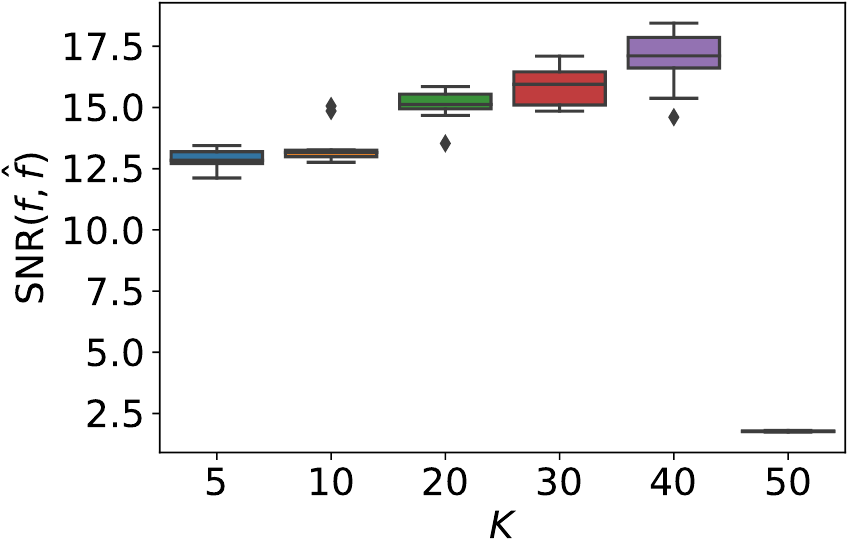}
\end{center}
\caption{SNR performance depending on parameter $K$ for \textit{minnesota} graph, $\sigma=0.01$,  a regular signal $f_{2593-2642}$ (left) and  a non-regular signal $f_{1343-1392}$ (right), over 10 realizations of noise.} 
\label{fig-KselectionSi2}
\end{figure}

\section{Conclusion and future works}

We have introduced a novel technique to efficiently perform graph Fourier analysis. This technique uses functional calculus to perform Fourier analysis on different subsets of the graph Laplacian spectrum. In this paper, we have demonstrated that localization in the spectrum provides interesting improvements in theoretical results for some graph signal analysis tasks. New estimators of the noise level were introduced, taking advantage of the convenient modelling of the denoising problem given by localization, and for which concentration results were proved. Localization allows also to study theoretically the denoising procedure with wavelets, and fits with the design of many well-known techniques (\emph{e.g.} tight frames for graph analysis). Through many experiments, we have validated that localization techniques introduced in this paper improve on state-of-the-art methods for several standard tasks.

Although we provide a rule of thumb to choose a partition $I_\L = \sqcup_{1\leq k\leq K} I_k$ for which denoising results show good performances, experiments suggest that our \textit{elbow rule} is not optimal in most cases. There is certainly an interesting topic in searching for a suitable partition $I_\L = \sqcup_{1\leq k\leq K} I_k$ that would be more adapted to a specific task (\emph{e.g.} denoising). To extend the current work, it would also be interesting to consider other common tasks in GSP, such as de-convolution or in-painting.

\section{Proofs} \label{sec:proofs}

\begin{proof}[Proof of Proposition \ref{prp:block-distrib}]
\begin{enumerate}
\item We have $\tilde f_k = f_k + \xi_k$, where $\xi_k=\sum_{\ell:\lambda_\ell\in I_k} \widehat \xi(\ell) \chi_\ell$. Random variables $(\widehat \xi(\ell))_\ell$ are all distributed as $\mathcal N(0,\sigma^2)$ and independent by orthogonality of the eigenbasis $(\chi_\ell)_\ell$. In particular for $k\neq k^\prime$, vectors $\xi_k$ and $\xi_{k^\prime}$ are independent as expressions involving variables $\widehat \xi(\ell)$ over disjoint subsets $I_k$ and $I_{k^\prime}$. Thus the random variables $(c_k)_{1\leq k\leq K}$ are also independent.
\item When $n_k = n_{k^\prime}$, $\xi_k$ and $\xi_{k^\prime}$ are identically distributed and the result follows from Equality (\ref{equ-noisyNorm}).

When $n_k \neq n_{k^\prime}$, we have $\mathbb E(c_k) \neq \mathbb E(c_{k^\prime})$ as the following equality holds for all $1\leq k\leq K$:
\begin{displaymath}
\mathbb E(c_{k}) = \frac{\| f_k \|_2^2}{n_k} + \sigma^2.
\end{displaymath}
\item Since $(\widehat \xi(\ell))_\ell$ are independent normal variables $\mathcal N(0,\sigma^2)$, the statement is clear from the expression $c_k = \frac{1}{n_k} \sum_{\ell:\lambda_\ell\in I_k} | \widehat \xi(\ell) |^2$.
\end{enumerate}
\end{proof}

The following lemma is useful for the proof of Proposition \ref{concentration-noiseLevel}.
\begin{lem}
\label{lem-binomial}
Let $Z \sim \mathcal B(n,p)$ for some parameters $n \geq 1$ and $p\leq 1/2$. Then
  \begin{displaymath}
    \mathbf P(Z \geq \lceil n/2 \rceil) \leq \exp \left( \frac{n}{2} \ln( 4p(1-p)) \right).
  \end{displaymath}
\end{lem}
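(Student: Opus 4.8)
The plan is to prove the tail bound for the upper tail of a binomial random variable $Z \sim \mathcal B(n,p)$ with $p \leq 1/2$, namely $\mathbf P(Z \geq \lceil n/2 \rceil) \leq \exp\left(\tfrac{n}{2}\ln(4p(1-p))\right)$, via a Chernoff-type argument using the moment generating function, followed by an explicit optimization of the free parameter.

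First I would recall that for $\lambda > 0$ one has $\mathbf P(Z \geq m) = \mathbf P(e^{\lambda Z} \geq e^{\lambda m}) \leq e^{-\lambda m}\,\mathbb E[e^{\lambda Z}]$ by Markov's inequality. Since $Z$ is a sum of $n$ i.i.d.\ Bernoulli$(p)$ variables, $\mathbb E[e^{\lambda Z}] = (1-p+pe^{\lambda})^n$, so with $m = \lceil n/2 \rceil \geq n/2$ and $\lambda > 0$ we get
\begin{displaymath}
\mathbf P(Z \geq \lceil n/2\rceil) \leq e^{-\lambda n/2}(1-p+pe^{\lambda})^n = \left(e^{-\lambda/2}(1-p+pe^{\lambda})\right)^n.
\end{displaymath}
It then suffices to minimize the base $h(\lambda) = e^{-\lambda/2}(1-p+pe^{\lambda})$ over $\lambda > 0$. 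Writing $u = e^{\lambda} > 1$, this is $u^{-1/2}(1-p+pu)$; differentiating in $u$ and setting the derivative to zero gives $-\tfrac12 u^{-3/2}(1-p) + \tfrac12 p u^{-1/2} = 0$, i.e.\ $pu = 1-p$, so $u^{\ast} = (1-p)/p \geq 1$ (using $p \leq 1/2$), which indeed corresponds to an admissible $\lambda^{\ast} = \ln((1-p)/p) \geq 0$.

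Next I would substitute $u = u^{\ast}$ back: $1-p+pu^{\ast} = (1-p) + (1-p) = 2(1-p)$, and $(u^{\ast})^{-1/2} = (p/(1-p))^{1/2}$, so $h(\lambda^{\ast}) = 2(1-p)\sqrt{p/(1-p)} = 2\sqrt{p(1-p)} = \sqrt{4p(1-p)}$. Hence $\mathbf P(Z \geq \lceil n/2\rceil) \leq \left(\sqrt{4p(1-p)}\right)^n = \exp\left(\tfrac{n}{2}\ln(4p(1-p))\right)$, which is exactly the claimed bound. One should note $4p(1-p) \leq 1$ on $[0,1]$, so the logarithm is well-defined (and nonpositive, making the bound meaningful); the edge case $p = 0$ is trivial, and for $p = 1/2$ the bound degenerates to the trivial $\mathbf P(\cdot) \leq 1$.

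The argument is entirely routine; there is no real obstacle. The only point requiring mild care is ensuring the minimizing $\lambda^{\ast}$ is genuinely nonnegative so that Markov's inequality applies with the correct sign — this is precisely where the hypothesis $p \leq 1/2$ enters, since it guarantees $(1-p)/p \geq 1$. A minor subtlety is that $m = \lceil n/2\rceil$ may exceed $n/2$ when $n$ is odd, but since $e^{-\lambda m} \leq e^{-\lambda n/2}$ for $\lambda > 0$, replacing $m$ by $n/2$ only weakens the bound in the right direction, so the stated inequality still holds.
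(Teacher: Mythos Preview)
Your proof is correct and follows essentially the same Chernoff-bound approach as the paper: the paper invokes Chernoff's inequality \cite{Che:52} directly and then bounds the resulting expression using $1/2 \leq a = \lceil n/2\rceil/n \leq 1/2 + 1/n$, whereas you derive the same exponential Markov bound from scratch and handle the ceiling at the Markov step via $e^{-\lambda m}\leq e^{-\lambda n/2}$. The two arguments are equivalent in spirit, yours being slightly more self-contained.
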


\begin{proof}
A simple consequence of \cite[Theorem 1]{Che:52} implies that for all $n \geq 1$ and all $a \geq p$
  \begin{displaymath}
    \mathbf P(Z \geq na) \leq \left [ \frac{(1-p)^{1-a}}{1-a} \left ( \frac{1-a}{a} p \right )^a \right ]^n.
  \end{displaymath}
  Now the result follows since $na=\lceil n/2 \rceil$ implies $\frac12 \leq a \leq \frac12+\frac1n$ so that
  \begin{displaymath}
  \frac{(1-p)^{1-a}}{1-a} \left ( \frac{1-a}{a} p \right )^a \leq \frac{(1-a)^{a-1}}{a^a}\sqrt{p(1-p)} \leq \sqrt{4p(1-p)}.
  \end{displaymath}
\end{proof}

\begin{proof}[Proof of Proposition \ref{concentration-noiseLevel}]
\begin{enumerate}
\item For $k$ such that $I_k \subset \overline{I_f}$, we have $c_k = \frac{\sigma^2}{n_k} \Gamma_{n_k}$, which follows the $\Gamma(\frac{n_k}{2}, \frac{2\sigma^2}{n_k})$ distribution. Then concentration inequalities for $\widehat \sigma_{\rm mean}(c)^2$ are a direct consequence of Theorem 2.57 in~\cite{BerDelRio:15}, applied with $a_k = \frac{n_k}{2}$ and $b_k = \frac{2\sigma^2}{n_k}$.\\

\item For all $k=1, \ldots, K_f$, we define
  \begin{equation*}
    \gamma^-_k = \Phi_{\Gamma_{n_0}}^{-1} \circ \Phi_{\Gamma_{n_k}} \left (\frac{n_k}{\sigma^2} c_k \right ) \quad \textrm{and} \quad \gamma^+_k = \Phi_{\Gamma_{n_\infty}}^{-1} \circ \Phi_{\Gamma_{n_k}} \left (\frac{n_k}{\sigma^2} c_k \right ).
  \end{equation*}
  As a matter of fact, $(\gamma^-)_{k=1, \ldots, K_f}$ and $(\gamma_k^+)_{k=1, \ldots, K_f}$ are two sequences of \emph{i.i.d.} random variables with $\gamma_1^- \sim \chi^2(n_0)$ and $\gamma^+_1 \sim \chi^2(n_\infty)$ such that
  \begin{equation*}
    \forall k=1, \ldots, K_f, \quad \gamma_k^- \leq \frac{n_k}{\sigma^2} c_k \leq \gamma^+_k \quad \textrm{almost surely}.
  \end{equation*}
  Then, for all $t>0$,
  \begin{align}
    \mathbb P \left ( \widehat \sigma^2_{\rm med} \geq \beta^{-1} \sigma^2 + 2\sigma^2 \beta^{-1} t \right ) & = \mathbb P \left ( \sum_{k=1}^{K_f} \indic_{\left \{ c_k \geq \beta^{-1} \sigma^2+2\sigma^2 \beta^{-1} t \right \}} \geq \left \lceil \frac{K_f}{2} \right \rceil \right ) \nonumber
    \\  
    \label{equ-indicatorsTrick}
    & \leq \mathbb P \left ( \sum_{k=1}^{K_f} \indic_{ \left \{ \gamma_k^+ \geq n_\infty + 2 n_\infty t \right \}} \geq \left \lceil \frac{K_f}{2} \right \rceil \right ).
  \end{align}
  Similarly, for all $t \in (0,1)$,
  \begin{align}
  \label{equ-indicatorsTrick2}
    \mathbb P \left ( \widehat \sigma^2_{\rm med} \leq \beta \sigma^2-\sigma^2 \beta t \right ) 
    & \leq \mathbb P \left ( \sum_{k=1}^{K_f} \indic_{ \left \{ \gamma_k^- \leq n_0 - n_0 t \right \}} \geq \left \lceil \frac{K_f}{2} \right \rceil \right ).
  \end{align}

To conclude, apply Lemma~\ref{lem-binomial} to Inequalities~\eqref{equ-indicatorsTrick} and~\eqref{equ-indicatorsTrick2} to obtain our result.
\end{enumerate}
\end{proof}

\begin{proof}[Proof of Proposition \ref{prp-orderedTerms}]
The concentration bound of Equation \eqref{equ-concentrationBound} implies that
\begin{align*}
\mathbb P(b_k \geq b_\ell)& = \mathbb P(n_k b_k \geq n_k b_\ell) = \mathbb P \left(  \sigma^{-2} \left ( \frac{n_k}{n_\ell} \| \xi_\ell \|_2^2 - \| \xi_k \|_2^2 \right ) \leq \sigma^{-2} \left (\|f_k\|_2^2 + 2 \langle f_k, \xi_k \rangle \right )  \right ) \\
& \geq \frac{\alpha}{2}+\mathbb P \left (  \sigma^{-2} \left ( \frac{n_k}{n_\ell} \|\xi_\ell \|_2^2 - \|\xi_k \|_2^2 \right ) \leq \sigma^{-2} \left (\|f_k\|_2^2 + 2 t_{\alpha, \sigma} \|f_k \|_2 \right ) \right).
\end{align*}
Since subsets $I_k$ and $I_\ell$ are disjoint, random variables $\|\xi_k \|_2^2$ and $\|\xi_\ell \|_2^2$ are independent. Thus, $\frac{n_k}{n_\ell} \|\xi_\ell \|_2^2 - \|\xi_k \|_2^2$ is distributed as $\frac{n_k}{n_\ell} \Gamma_{n_\ell} - \Gamma_{n_k}$ where $\Gamma_{n_k}$ and $\Gamma_{n_\ell}$ are independent random variables with $\Gamma_{n_k} \sim \chi^2(n_k)$ and $\Gamma_{n_\ell} \sim \chi^2(n_\ell)$. Therefore, the statement of Proposition \ref{prp-orderedTerms} follows.
\end{proof}

\begin{proof}[Proof of Proposition \ref{prp-denoiseTrivial}]
First, the following equalities hold:
\begin{displaymath}
f-\widetilde{f}= f-\widetilde{f}_I  +  \widetilde{f}_I - \widetilde{f}= (f-\widetilde{f})_I  +  \widetilde{f}_{\overline{I}}.
\end{displaymath}
As $(f-\widetilde{f})_I$ and $\widetilde{f}_{\overline{I}}$ are orthogonal vectors, it follows that
\begin{displaymath}
\left \| f-\widetilde{f} \right \|_2^{2} = \left \| f-\widetilde{f}_I \right \|_2^{2} + \left \| \widetilde{f}_{\overline{I}} \right \|_2^{2}.
\end{displaymath}
It remains to notice that $\mathbb{E}( \n \widetilde{f}_{\overline{I}} \n^{2} ) = \sigma^{2} \vert \overline{I} \cap \mathrm{sp}(\L) \vert $.
\end{proof}

\begin{proof}[Proof of Lemma \ref{lem-pk<alpha}]
By Equation \eqref{equ-noisyNorm} and the concentration bound of Equation \eqref{equ-concentrationBound}, it follows that
\begin{displaymath}
\begin{split}
p_k & = \mathbb P( \sigma^2 \Gamma_{n_k} > \|\xi_{k}\|_2^2 + \|f_k\|_2^2 + 2 \langle f_k, \xi_k\rangle )\\
&
\leq \frac{\alpha}{2} + \mathbb P \left( \sigma^2 \Gamma_{n_k} > \|\xi_{k}\|_2^2 + \|f_k\|_2^2 - 2 \|f_k\|_2 t_{\alpha / 2,\sigma} \right) \\
& = \frac{\alpha}{2}  + \mathbb P \left( \sigma^2 (\Gamma_{n_k} - \Gamma_{n_k}^\prime) >  \|f_k\|_2^2 - 2 \|f_k\|_2 t_{\alpha / 2,\sigma} \right)  \\
& = \frac{\alpha}{2} + 1 - \Phi_{\Gamma_{n_k} - \Gamma_{n_k}^\prime}( \theta(f_k,\alpha,\sigma) ),
\end{split}
\end{displaymath}
where $\theta(f_k,\alpha,\sigma) = \sigma^{-2}( \|f_k\|_2 - 2 t_{\alpha / 2,\sigma})  \|f_k\|_2 $. Consequently, $1 - \Phi_{\Gamma_{n_k} - \Gamma_{n_k}^\prime}( \theta(f_k,\alpha,\sigma) ) \leq \alpha/2$ and $p_k \leq \alpha$.
\end{proof}

\begin{proof}[Proof of Lemma \ref{lem-pk>alpha}]
Using an estimate on the $\chi^2(n_k)$ tail distribution and independence of $\Gamma_{n_k}$ and $\Gamma_{n_k}^\prime = \sigma^{-2} \| \xi_k \|_2^2$, it follows
\begin{displaymath}
\begin{split}
  p_k & \geq \mathbb P \left( \sigma^2 \Gamma_{n_k} >  \|\xi_k\|_2^2 + \|f_k\|_2^2 + 2\langle \xi_k, f_k \rangle, \| \xi_k \|_2^2 \leq t_{\beta, k} \right ) \\
  & \geq \mathbb P \left( \sigma^2 \Gamma_{n_k} >  \|\xi_k\|_2^2 + \|f_k\|_2^2 + 2 \|\xi_k\|_2 \|f_k\|_2, \| \xi_k \|_2^2 \leq t_{\beta, k} \right) \\
  & = \mathbb P \left( \sigma^2 \Gamma_{n_k} >  ( \|f_k\|_2 + \sqrt{t_{\beta, k}} )^2,\| \xi_k \|_2^2 \leq t_{\beta, k} \right) \\
  & = \mathbb P \left( \sigma^2 \Gamma_{n_k} >  ( \|f_k\|_2 + \sqrt{t_{\beta, k}})^2, \sigma^2 \Gamma_{n_k}^\prime \leq t_{\beta, k} \right) \\
  & \geq \mathbb P \left( \sigma^2 \Gamma_{n_k} >  ( \|f_k\|_2 + \sqrt{ t_{\beta, k}  } )^2  \right) (1  -  \beta) \\
&\geq \frac{\alpha}{1 - \beta} \times (1 - \beta)=\alpha.
\end{split}
\end{displaymath}
\end{proof}

\begin{proof}[Proof of Proposition \ref{prp-denois-unknownI}]
\begin{enumerate}
\item To prove Inequality \eqref{equ-FourierSupportError}, first observe that $f = \sum_{k: I_k\subset I_f} f_k$ so that
\begin{displaymath}
f-f_I = \sum_{k: I_k\subset I_f} f_k - \sum_{k: p_k \leq \alpha} f_k.
\end{displaymath}
The summands which are not present in both terms are exactly those satisfying either $I_k\subset I_f$ and $p_k>\alpha$ or $I_k \cap I_f = \emptyset$ and $p_k\leq \alpha$. Noting that $f_k = 0$ when $I_k \cap I_f = \emptyset$, it comes
\begin{displaymath}
\|f-f_I\|_2^2 = \sum_{k: I_k\subset I_f, p_k > \alpha} \|f_k\|_2^2.
\end{displaymath}
Applying Lemma \ref{lem-pk<alpha} for all indices $1\leq k \leq K$ satisfying $p_k > \alpha$, one deduce
\begin{displaymath}
\| f_k \|_2 < t_{\alpha / 2, \sigma}  +  \sqrt{ t_{\alpha / 2, \sigma}^2  +  \left( \sigma \Phi_{\Gamma_{n_k} - \Gamma_{n_k}'}^{-1}( 1- \frac{\alpha}{2} ) \right)^2  }.
\end{displaymath}
from which, since $n_k=n_1$ for all $k$, Inequality~\eqref{equ-FourierSupportError} follows.

\item Since $\sigma^{-2}||\xi_k||_2^2$ is distributed as a $\chi^2(n_1)$ random variable, the second Inequality~\eqref{equ-FourierNoiseError} follows
\begin{displaymath}
\mathbb E \|f_I - \widetilde f_I \|_2^2 = \sum_{k: p_k\leq \alpha} \mathbb E \|\xi_k\|_2^2= |\{ k, p_k \leq \alpha \}| n_1 \sigma^2.
\end{displaymath}
\end{enumerate}
\end{proof}

\begin{proof}[Proof of Theorem \ref{thm-oracleBound}]
Since threshold value is $t_2 = 0$ on $\overline{I}$, $\widehat f = \widehat f_I$. Then, clearly $f_{\overline{I}} (f_I - \widehat f_I) = 0$ almost surely so that
\begin{displaymath}
\mathbb E  \|f-\widehat f\|_2^2= \mathbb E \|f-\widehat f_I\|_2^2= \mathbb E  \|f - f_I + f_I - \widehat f_I \|_2^2= \mathbb E \|f - f_I \|_2^2 + \mathbb \| f_I - \widehat f_I \|_2^2.
\end{displaymath}
Applying Theorem 3 from \cite{gobel2018construction} to $\mathbb  E \| f_I-\widehat{f_I} \|_2^2$ yields our statement.
\end{proof}

\begin{proof}[Proof of Proposition \ref{prp-tightFrames}]
Recalling that, for any function $g$ defined on $\spL$ and any subset $I\subset I_{\L}$,
\begin{displaymath}
\sum_n | \langle \sqrt{g}(\L_{I})\delta_n , f \rangle |^2 = \| \sqrt{g}(\L_{I})f \|_2^2 = \langle g(\L_{I})f, f \rangle
\end{displaymath}
it follows by Equations~\eqref{equ-parsevalFrame} and~\eqref{equ-correspondence}.
\begin{multline*}
  \sum_{n,k} |\langle\varphi_{n,k}, f\rangle|^2  +  |\langle\psi_{1,n,k}, f\rangle|^2  +  |\langle\psi_{2,n,k}, f\rangle|^2 \\
  =  \sum_{k}  \langle\psi_0(\L_{I_k}) f, f\rangle  +  \langle\psi_k(\L_{I_k}) f, f\rangle +  \langle\psi_k(\L_{I_{k+1}}) f, f\rangle \\
  =  \langle\psi_0(\L) f, f\rangle  + \sum_k \langle\psi_k(\L) f, f\rangle =  \| f \|_2^2
\end{multline*}
\end{proof}

\begin{proof}[Proof of Proposition \ref{prp-spectrumAdaptedFrames}]
Remarking that $\widehat g_j(\lambda)=\psi(s_j \omega_0(\lambda) )$ with $s_j = e^{-j \omega_{\gamma,J,R}}$, Equation~\eqref{eq-spectrumAdaptedTightFrame} implies that
\begin{equation}
\label{equ-equ7}
g_{m,j}  =  \sum_l  \psi(s_j \omega_0(\lambda_l))  \widehat \delta_m(l) \chi_l.
\end{equation}
Setting $J_j = [C^{-1} e^{(j-R)\omega_{\gamma,J,R}}  ,  C^{-1} e^{j\omega_{\gamma,J,R}} ]$ and recalling that ${\rm supp}(\widehat g) = [ -R\omega_{\gamma,J,R}, 0]$, it follows that $\lambda \in \mathrm{supp}(\widehat g_j)$ if and only if $\omega_0(\lambda) \in J_j$ if and only if $s_j\omega_0(\lambda) \in J_0$. Moreover, $J_0=\sqcup_k I_k$ and $J_j=s_j^{-1} J_0$ yield $J_j=\sqcup_k s_j^{-1} I_k$ with $s_j^{-1} I_k \cap s^{-1}_{j^\prime} I_{k^\prime}$ excepted when $j=j^\prime$ ad $k=k^\prime$. Consequently, Equation~\eqref{equ-equ7} can be reformulated as
\begin{equation*}
g_{m,j} = \sum_{1\leq k\leq R-1} \sum_{\ell\in I_k}  \psi(s_j \omega_0(\lambda_\ell))  \widehat \delta_m(\ell) \chi_\ell.
\end{equation*}
 
\end{proof}

\bibliography{references}

\end{document}